\newtheorem{theorem}{Theorem}
\newtheorem{lemma}{Lemma}
\newtheorem*{lemma*}{Lemma}
\newtheorem{corollary}{Corollary}
\newtheorem{proposition}{Proposition}
\def\b0{{\pmb{0}}} 
\def\romanone{\uppercase\expandafter{\romannumeral 1}}
\def\romantwo{\uppercase\expandafter{\romannumeral 2}}
\def\romanthree{\uppercase\expandafter{\romannumeral 3}}
\def\romanfour{\uppercase\expandafter{\romannumeral 4}}
\def\romanfive{\uppercase\expandafter{\romannumeral 5}}
\theoremstyle{remark}
\def\black{\textcolor{black}}
\begin{document}

\title{Analyzing Downlink Coverage in Clustered Low Earth Orbit Satellite Constellations: A Stochastic Geometry Approach}

\author{\IEEEauthorblockN{Miyeon Lee, \textit{Graduate Student Member, IEEE}, 
 Sucheol Kim \textit{Member, IEEE}, Minje Kim \textit{Graduate Student Member, IEEE}, Dong-Hyun Jung, \textit{Member}, and Junil Choi, \textit{Senior Member, IEEE}}\\
\thanks{This work was partly supported by Institute of Information\& communications Technology Planning \& Evaluation(IITP) grant funded by the Korea government(MSIT)(No.2021-0-00847, Development of 3D Spatial Satellite Communications Technology), Korea Research Institute for defense Technology planning and advancement(KRIT) grant funded by the Korea government(DAPA(Defense Acquisition Program Administration))(KRIT-CT-22-040, Heterogeneous Satellite constellation based ISR Research Center, 2022), and the Institute of Information \& Communications Technology Planning \& Evaluation(IITP)-ITRC(Information Technology Research Center) grant funded by the Korea government(MSIT)(IITP-2025-RS-2020-II201787, contribution rate 20\%).}
\thanks{Miyeon Lee , Minje Kim, and Junil Choi are with the School of Electrical Engineering, Korea Advanced Institute of Science and Technology, Daejeon, 34141, South Korea (e-mail: \{mylee0031; mjkim97; junil\}@kaist.ac.kr).}
\thanks{Sucheol Kim is with the Satellite Communication Research
Division, Electronics and Telecommunications Research Institute, Daejeon, 34129, South Korea (e-mail: loehcusmik@etri.re.kr).}
\thanks{Dong-Hyun Jung is with the School of Electronic Engineering, Soongsil University, Seoul, 06978, South Korea (e-mail: dhjung@ssu.ac.kr).}}

\maketitle
\begin{abstract}
Satellite networks are emerging as vital solutions for global connectivity beyond 5G. As companies such as SpaceX, OneWeb, and Amazon are poised to launch a large number of satellites in low Earth orbit, the heightened inter-satellite interference caused by mega-constellations has become a significant concern. To address this challenge, recent works have introduced the concept of satellite cluster networks where multiple satellites in a cluster collaborate to enhance the network performance. In order to investigate the performance of these networks, we propose mathematical analyses by modeling the locations of satellites and users using Poisson point processes, building on the success of stochastic geometry-based analyses for satellite networks. In particular, we suggest the lower and upper bounds of the coverage probability as functions of the system parameters, including satellite density, satellite altitude, satellite cluster area, path loss exponent, and the Nakagami parameter $m$. We validate the analytical expressions by comparing them with simulation results. Our analyses can be used to design reliable satellite cluster networks by effectively estimating the impact of system parameters on the coverage performance.
\end{abstract}

\begin{IEEEkeywords}
	Satellite communications, mega-constellations, satellite cluster, stochastic geometry, coverage probability.
\end{IEEEkeywords}

\section{Introduction}\label{sec1}
Satellite communications have garnered interest as a promising solution to achieve ubiquitous connectivity. According to the International Telecommunication Union (ITU) report \cite{ITU}, nearly half of the global population remains unconnected. Satellites are expected to address this connectivity gap, offering a viable solution that alleviates the cost burden faced by telecom operators when deploying terrestrial base stations \cite{Chen:2021}. Traditionally, geostationary orbit (GEO) satellites were the primary choice for satellite networks due to the extensive coverage facilitated by their high altitudes. However, the extremely high altitude of GEO satellites, \black{approximately 35,800~km}, presents challenges such as high latency and low area spectral efficiency, making them less suitable for 5G applications. The industry has shifted {toward} adopting low Earth orbit (LEO) satellites, \black{defined by orbits at altitudes up to 2,000~km}, to reduce latency, increase service density, and facilitate cost-effective launches \cite{Liu:2021, Hassan:2020, Wang:2022}. Companies like SpaceX, OneWeb, and Amazon are planning to deploy large constellations in LEOs, often referred to as \black{mega-constellations} \cite{Del:2019}. \black{However, in mega-constellations, increasing the number of satellites can lead to performance degradation due to severe inter-satellite interference. To address this challenge, satellite cooperation strategies such as LEO remote sensing \cite{Jian:2020} and satellite clusters \cite{Jung:2023},\cite{Homssi:2022} have been developed to manage dense satellite networks and mitigate inter-satellite interference effectively.}

As mega-constellations complicate simulation-based network performance analyses, the need for new tools arises for effective assessment. Stochastic geometry is introduced to mathematically analyze performance, modeling the random behavior of nodes in wireless networks, including base stations and users through point processes such as binomial point processes (BPPs) or Poisson point processes (PPPs) \cite{Haenggi:2012}. Previous studies \cite{Haenggi:2009,Andrews:2011,Dhilon:2012,Novlan:2013,Tan:2014,Afshang:2017} have conducted stochastic geometry-based analyses for wireless networks using various performance metrics for terrestrial networks. A framework for modeling wireless networks was proposed in \cite{Haenggi:2009}, addressing the outage probability, network throughput, and capacity. \black{In \cite{Andrews:2011}, {the authors derived the coverage probability and average achievable rate using a PPP and justified the characterization of the distribution of base stations as the PPP.} In \cite{Dhilon:2012}, the coverage probability and average achievable rate were characterized under downlink heterogeneous cellular networks modeled using the PPP.} In \cite{Novlan:2013}, the coverage probability for uplink cellular networks modeled by the PPP was studied. The authors in \cite{Tan:2014} derived the downlink coverage probability considering base station cooperation within the PPP framework. A BPP model for cache-enabled networks was used to characterize the downlink coverage probability and network spectral efficiency in \cite{Afshang:2017}.

Building on stochastic geometry-based analyses in terrestrial networks, recent works have extended its application to satellite networks. Given the envisaged prevalence of large-scale satellite constellations, it is crucial to analyze the network performance in scenarios where multiple satellites serve a user, moving beyond a single-satellite setup. In this work, we focus on satellite cluster networks, a pertinent solution in mega-constellations, and employ stochastic geometry to characterize the coverage probability, providing an effective means of estimating the impact of system parameters.

\subsection{Related Works}
\black{Recent works \cite{Okati:2020, Anna:2021, Park:2023, Al:2021, Al2:2021, Okati:2022} {have utilized stochastic geometry to evaluate the downlink coverage performance of satellite networks.} {In \cite{Okati:2020}, the downlink coverage probability and average achievable rate were derived for} a scenario in which users are associated with the nearest satellite, with satellite locations modeled using a BPP on a sphere, assuming that the satellite distribution in a certain area follows the binomial distribution. Coverage performance in another scenario, where satellite gateways act as relays between users and LEO satellites, was investigated in \cite{Anna:2021} using the BPP for satellite location distribution modeling. In \cite{Park:2023}, the downlink coverage probability for a scenario where users are associated with the nearest satellite was derived using a PPP to model satellite location distribution, assuming that satellite distribution in a certain area follows the Poisson distribution. Additionally, the authors demonstrated that the PPP could accurately represent the actual Starlink constellation in terms of visible satellite numbers. The authors in \cite{Al:2021} focused on deriving coverage probability from the contact angle distribution, incorporating path loss modeling and line-of-sight probability formulations while modeling satellite locations using the PPP. In \cite{Al2:2021}, {satellite altitude was optimized to maximize} downlink coverage probability, with satellite locations modeled using the PPP. Additionally, to address variations in satellite densities across latitudes, coverage probabilities and average achievable data rates were derived using a non-homogeneous PPP in \cite{Okati:2022}. Based on these works, it is confirmed that stochastic geometry provides a robust analytical framework for satellite network performance analyses. {However, these performance analyses have primarily focused on scenarios in which a single satellite serves a user, without incorporating cooperative strategies among satellites to manage interference effectively in large satellite constellations.}}

{The proliferation of LEO satellites underscores the necessity for a theoretical understanding of system parameters to design networks effectively, with attention to interference mitigation, as increasing network density can reduce coverage probability due to rising interference. While performance analyses for single satellite-to-user scenarios are well-established, studies on cooperative transmission involving multiple satellites through clustering are incipient. In early work \cite{Kim:2024}, satellite cooperative transmission using coordinated beamforming with up to $\mathrm{K}$-nearest satellites was explored, and the coverage probability was derived. In contrast, we propose a fundamentally distinct approach by defining a cluster area and adopting non-coherent joint transmission for cooperative transmission. This technique enables data transmission with less stringent requirements for synchronization and channel state information compared to other techniques \cite{Lee:2012}, leveraging satellite clusters formed based on their geographical proximity to the user.}


\subsection{Contributions}
We focus on \black{LEO satellite networks}, where a user is served by multiple satellites in a cluster area, while satellites outside this area act as interfering nodes. To evaluate network performance, we employ stochastic geometry to model the locations of satellites and users. In particular, the satellite locations follow a homogeneous PPP on the surface of a sphere defined by the satellite orbit radius, and the user locations also follow a homogeneous PPP on the surface of a sphere defined by the Earth radius.

Distinct from the $\mathsf{K}$-nearest clustering approach in \cite{Kim:2024}, our derivation considers the satellite cluster area for cooperative transmission, ensuring proximity both between satellites and users as well as among satellites. In line with existing studies and the growing focus on satellite cooperation, this work broadens the scope of mathematical performance analysis for satellite cluster networks.

The main contributions are summarized as follows:
\begin{itemize}

\item Using stochastic geometry, we derive the lower and upper bounds of the coverage probability {for a user served by multiple cooperating satellites.} To address the effects of rounding operations during bound derivation, we propose a heuristic equation that incorporates relative distances. The results demonstrate that the derived bounds effectively encapsulate the simulation outcomes, with the heuristic approach also exhibiting reasonable {alignment}. {Furthermore, for the special case where small-scale fading follows the Rayleigh distribution,} we deduce \black{{compact expressions}.} 

\item {As a preliminary step in deriving bounds of the coverage probability, we address compound terms, i.e., the accumulated cluster power and interference power, induced by joint transmissions via cooperative satellites to mitigate inter-satellite interference. By leveraging properties that the sum over Poisson processes on a finite sphere surface can be approximated as a Gamma random variable, we determine the corresponding shape and scale parameters, enabling simultaneous management of the accumulated cluster power and interference power.}


\item {\black{From the main results, we interpret the following observations: (i) the coverage probability obtained through satellite cooperation increases with the cluster area, and (ii) the impact of increasing the line-of-sight (LOS) on coverage probability depends on the cluster area.} We perform detailed analyses on how system parameters impact the coverage probability in satellite cluster networks, offering essential insights for system design.}
\end{itemize}

\subsection{Organizations and Notations}
The remainder of the paper is organized as follows. We describe the system model and the performance metric framework in Section~\ref{sec2}. Baseline preliminaries and the derived results of the lower and upper bounds of the coverage probability are presented in Section~\ref{sec3}. In Section~\ref{sec4}, simulation results validate the analytical expressions and analyze the impact of system parameters on the coverage probability. Finally, conclusions follow in Section~\ref{sec5}.
\subsubsection*{\black{Notations}} Sans-serif letters ($\mathsf{X}$) denote random variables while serif letters ($x$) denote their realizations or scalar variables. Lower boldface symbols ($\mathbf{x}$) denote column vectors. $\mathbb{E}[ \mathsf{X} ]$ and $\text{Var}[\mathsf{X}]$ denote the expectation and variance. $\lVert \mathbf{x} \rVert$ denotes the Euclidean norm. $f_\mathsf{X}(\cdot)$ and $F_\mathsf{X}(\cdot)$ are used to denote the probability density function (PDF) and the cumulative density function (CDF). The Laplace transform of $\mathsf X$ is defined as $\mathcal{L}_{\{\mathsf X\}}(s)=\mathbb{E}\left[e^{-s\mathsf X}\right]$. $P(\cdot)$ denotes probability. $\Gamma(k,\theta)$ is used to denote the Gamma distribution with the shape parameter $k$ and scale parameter $\theta$. The Gamma function is defined as ${\Gamma(z) = \int_{0}^{\infty} t^{z-1}e^{-t}dt }$. Specifically, for positive integer $z$, the Gamma function is defined as $\Gamma(z)=(z-1)!$. ${}_2F_1\left(\cdot, \cdot ; \cdot ; \cdot\right)$ is the Gauss's hypergeometric function, {and ${}_1F_1\left(\cdot; \cdot ; \cdot \right)$ is the confluent hypergeometric function.} $\lceil \cdot \rceil$ and $\lfloor \cdot \rfloor$ denote the round-up and -down operations. 

\section{System Model}\label{sec2}
We consider a downlink satellite network depicted in Fig.~\ref{fig_system}. Satellites are located on a sphere with radius $R_{\mathrm  S}$ according to a homogeneous spherical Poisson point process (SPPP) $\Pi = \{\mathbf{x}_1, \ldots, \mathbf{x}_{\mathsf N_{\mathrm S}}\}$ where $\mathbf{x}_{l}$, $l \in \{1, \ldots, \mathsf N_{\mathrm S}\}$, is the position of the satellite $l$ in a spherical coordinate system with {polar} and azimuth angles spanning from $0$ to $\pi$ and $0$ to~$2\pi$, respectively. The total number of satellites, ${\mathsf N_{\mathrm S}}$, follows the Poisson distribution with density $\lambda_{\mathrm S}$. {While the homogeneous PPP does not precisely capture latitudinal density variations in satellite distributions, it buttresses the approximation of actual distributions observed in Starlink satellite constellations, as demonstrated in \cite{Park:2023}.} Single-antenna users, positioned on the Earth's surface with radius $R_{\mathrm  E}$, are also distributed according to a homogeneous SPPP, denoted as $\Pi_\mathrm{U} = \{\mathbf{u}_1, \ldots, \mathbf{u}_{\mathsf N_{\mathrm U}}\}$, where ${\mathsf N}_{\mathrm U}$ is the total number of users on the Earth, which follows the Poisson distribution with density $\lambda_\mathrm U$. In our downlink performance analyses, we focus on a typical user located at $\mathbf{u}_1=(0,0,R_{\mathrm E})$, which exemplifies spatially averaged performance without compromising generality. This approach is supported by the Slivnyak's theorem, which applies to a homogeneous PPP \cite{Haenggi:2012}.
\subsection{Observable Spherical Dome and Cluster Area}
Given that satellites below a specified elevation angle may be invisible to the typical user, we introduce the elevation angle $\theta_{\mathrm{min}}$ to define the observable spherical dome $\mathcal{A}$. To consider the visibility constraint, we define the maximum distance to the satellites in the observable spherical dome as $R_{\mathrm{max}}$, which is given by
\begin{align} \label{cos}
R_{\mathrm{max}}= -R_{\mathrm  E} \sin \theta _{\mathrm{ min}}+\sqrt {R_{\mathrm  S}^{2} -R_{\mathrm  E}^2\cos^2 \theta _{\mathrm{ min}}},
\end{align}
using the law of cosines. Then we calculate the surface area of the observable spherical dome, $|\mathcal{A}|$, using the Archimedes' Hat box theorem as \cite{Cundy:1989}
\begin{align}\label{dome}
|\mathcal {A}|=&2\pi R_{\mathrm  S}\left ({R_{\mathrm  S} - R_{\mathrm  E} - R_{\mathrm{ max}} \cos \left(\pi /2 - \theta _{\mathrm{min}}\right)}\right).
\end{align}
To define the satellite cluster area $\mathcal{A}_{\mathrm{clu}}$ where the satellites in the region cooperatively serve a user, we introduce the polar angle $\phi_{\mathrm {clu}}$, which represents the maximum zenith angle of the multiple satellites in the cluster area as shown in Fig.~\ref{fig_system}. The maximum distance from the typical user to the cluster area, $R_{\mathrm{clu}}$, and the surface area of the cluster $|\mathcal{A}_{\mathrm{clu}}|$ are then respectively given by
\begin{align}
R_{\mathrm{clu}}=\sqrt{R_{\mathrm  S}^2+R_{\mathrm  E}^2-2R_{\mathrm{S}}R_{\mathrm{E}}\cos\phi_{\mathrm{clu}}},
\end{align}
\begin{align}
|\mathcal {A}_{\mathrm{clu}}|=&2\pi R_{\mathrm  S}\left(R_{\mathrm  S}-R_{\mathrm  E} -\frac{R_{\mathrm  S}^2-R_{\mathrm  E}^2-R_{\mathrm{clu}}^2}{2 R_\mathrm  E}\right).
\end{align}
{Additionally, we define the minimum distance to the cluster area as $R_{\mathrm{min}}=R_\mathrm  S -R_\mathrm  E$.}

\begin{figure} [t!]
	\centering
	\includegraphics[width=0.81\columnwidth]{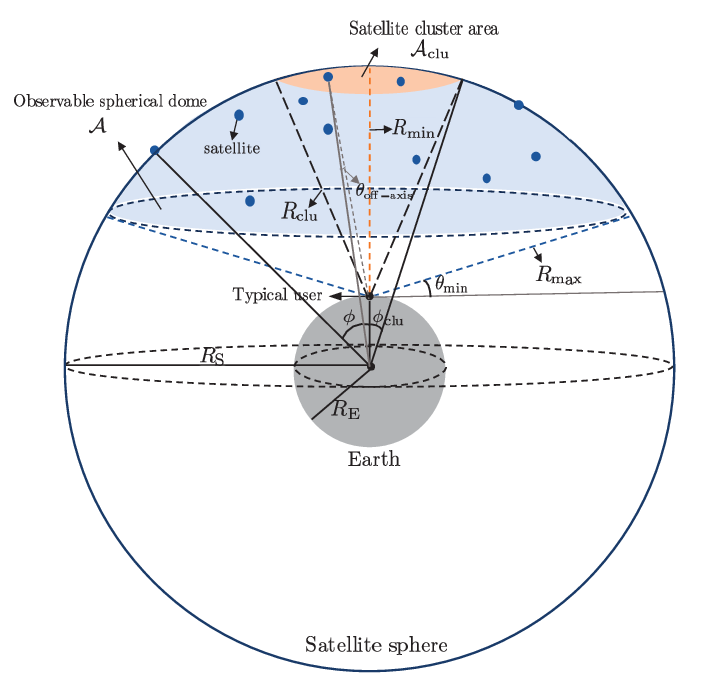}
	\caption{{The geometry of satellite network with a cluster area where satellites are randomly distributed on a sphere with radius $R_{\mathrm S}$, and the typical user is located on a sphere with radius $R_{\mathrm E}$.}}\label{fig_system}
\end{figure}

\textbf{Remark 1.} {The cluster area $\mathcal{A}_{\mathrm{clu}}$ is designed to strategically group satellites geographically close to users and maintain proximity among satellites. Proximity to the user enhances channel gain, while proximity among satellites provides efficient data routing. This configuration allows {satellites without direct connection to a gateway to efficiently receive data via} multiple inter-satellite link (ISL) relays, particularly in scenarios with limited gateways. Unlike the $\mathrm{K}$-nearest satellites clustering approach {in \cite{Kim:2024}}, which requires frequent recalculations of proximity and may include distant satellites depending on the value of $\mathrm{K}$, {the cluster area-based approach} streamlines operations by requiring satellites to maintain only essential positional information and strictly limits inclusion to geographically proximate satellites.}

{\textbf{Remark 2.} When the typical user's nearby user forms a cluster area, the satellite cluster areas of the typical user and the nearby user may overlap, requiring satellites to serve multiple users {simultaneously}. This can reduce the available time-frequency resources per satellite and {weaken} the cooperation gain. However, since satellite networks operate over large bandwidths, such as the Ka- and Ku-bands \cite{3GPP:821}, satellites in overlapping regions {may be able to} assign orthogonal time-frequency resources based on users' data rate requirements in practical applications (e.g., text messaging, internet browsing). As a result, cooperation gain can be maintained, and the typicality assumption under the Slivnyak's theorem remains valid without loss of generality.}

\subsection{Channel Model}
\subsubsection{Propagation Model}
Our model encompasses large- and small-scale fading to characterize channel quality variation. Large-scale fading is captured using a path loss model, denoted as $\lVert \mathbf{x}_{l}-\mathbf{u}_1\rVert^{-\alpha}$ for $l \in \{1,\ldots,{\mathsf N}_{\mathrm S}\}$, which depends on the distance and the path loss exponent $\alpha$ \cite{Okati:2020}, \cite{Park:2023}. {For small-scale fading, we model the LOS properties, attributed to the high altitude of the satellites, using the Nakagami-$m$ distribution \cite{Chae:2023, Noh:2023, Okati:2022, Park:2023, Okati:2023}. This distribution captures LOS and non-line-of-sight (NLOS) effects by adjusting the Nakagami parameter~$m$, where $m=1$ corresponds to the Rayleigh distribution, and $m\rightarrow\infty$ approximates no fading. The PDF of the power of small-scale channel fading $\mathsf{H}_l$ from the satellite $l$ to the typical user is expressed as 
\begin{align} \label{nak}
f_{\mathsf{H}_l}(h)=\frac{m^m e^{-m h} h^{m-1}}{\Gamma(m)}, \quad h\geq 0, 
\end{align}
where $\mathbb{E}[\mathsf{H}_l]=1$ and $\mathbb{E}[\mathsf{H}_l^2]=\left(1+\frac{1}{m}\right)$.

{While the Nakagami-$m$ distribution effectively captures LOS effects, the shadowed-Rician distribution is also used to characterize small-scale fading in satellite networks \cite{Jung:2022}. The PDF of the squared magnitude of the shadowed-Rician fading is given by
\begin{align}
f_{\mathsf{H}_l}(h)=\frac{1}{2b_{0}}\left(\frac{2mb_{0}}{2mb_{0}+\Omega}\right)^{m}\exp\left(-\frac{h}{2b_{0}}\right) \notag \\
&\hspace{-11 pc} \cdot {}_1F_1 \left(m,\ 1,\ \frac{\Omega}{2b_{0}(2mb_{0}+\Omega)} h \right), 
\end{align}
where $b_0$ represents half of the average power of the NLOS component, $\Omega$ denotes the average power of the LOS component, and $m$ is the Nakagami parameter.}

{Although both the Nakagami-$m$ and the shadowed-Rician fading effectively characterize LOS effects in satellite networks, we adopt the Nakagami-$m$ distribution in this study for analytical tractability in performance analysis. Fig.~\ref{comp_naksha} of Section {\romanfour} exhibits moderate alignment in performance under these two small-scale fading models, indicating that our framework can be extended for performance analysis when the shadowed-Rician fading is considered.}

\subsubsection{Antenna Gain}
We assume that the satellites' beams have directional radiation patterns, while the antenna of the typical user employs an isotropic radiation pattern. 


{Given that the satellite employs a directional beam radiation pattern focused in the sub-satellite direction to maximize signal power, the beam pattern can be effectively approximated using a sectored model with constant directivity gains for both main and side lobes, as explained in \cite{Bai:2014}. This approximation preserves critical characteristics of complex beamforming patterns, such as directivity gain and half-power beamwidth, while significantly simplifying the analysis. The sectored antenna gain for the satellite $l$, based on the two-lobe approximation with Dolph-Chebyshev beamforming weights \cite{Kor:2009} and Friis's equation \cite{Friis}, is expressed as \cite{Park:2023},\cite{Bai:2015, Singh:2015, Renzo:2015}
\begin{align}
G_{l} = \begin{cases} \displaystyle G_{\mathrm i}^{\mathrm{ t}}G^{\mathrm{ r}} \frac {c^{2}}{(4\pi f_{\mathrm c})^{2}}, & \mathbf{x}_{l} \in \mathcal{A}_\mathrm{clu},\\ \displaystyle G_{\mathrm o}^{\mathrm{ t}}G^{\mathrm{ r}}\frac {c^{2}}{(4\pi f_{\mathrm c})^{2}}, & \mathbf{x}_{l} \in \mathcal{A} \backslash \mathcal{A}_\mathrm{clu}, \end{cases}
\end{align}
where $f_{\mathrm c}$ is the carrier frequency, and $c$ is the speed of light.} The transmit antenna gain is determined by the satellite location, with $G^{\mathrm t}_{\mathrm i}$ for satellites in the cluster area and $G^{\mathrm t}_{\mathrm o}$ for those outside.
The typical user's receive antenna gain is denoted as $G^{\mathrm r}$. 

The sectored antenna gain model accounts for coverage overlap between satellite beams targeting users due to the satellite proximity. Given that satellite proximity within the cluster area occurs frequently \cite{Su:2019, Cao:2024}, driven by intersections across multiple orbital planes and the increasing deployment of LEO satellites, this proximity contributes to coverage overlap of each satellite's beam to the user, thereby enabling a terrestrial area where users are located to be served by multiple beams from adjacent satellites. \black{According to the sectored antenna gain model, adjacent satellites within the cluster area then \black{will cooperatively serve the typical user} with main-lobe gains, while satellites outside the cluster area \black{will cause interference to the typical user with} side-lobe gains, resulting in reduced antenna gain due to the large off-boresight angles to the user \cite{Tang:2021}.}  In cases where the cluster area of another user overlaps with the typical user's cluster, satellites in the overlapping region will serve both users simultaneously with main-lobe gains due to the proximity between the users.

\subsection{Transmission Method and Performance Metric}\label{metric}

\black{{Orthogonal frequency division multiplexing (OFDM)}, adopted in 3rd Generation Partnership Project (3GPP) standards for 5G New Radio (NR), has inspired its application in satellite to enhance compatibility with terrestrial networks \cite{Huang:2021, Nein:2024, Nein:jsac, Hee:2011}. We explore OFDM-based satellite networks where satellites in the cluster area cooperatively serve the typical user through non-coherent joint transmission. {This method enables satellites to serve the typical user without stringent synchronization, enhancing received power via non-coherent signal summation. We refer to \cite{Tan:2014} for details about the non-coherent joint transmission.}}

We assume an interference-limited network and use the signal-to-interference ratio (SIR) as the performance metric to derive coverage probability. This choice encapsulates the scenario where interference becomes predominant with the increasing number of satellites, rendering noise negligible. We also approximate the accumulated received power from the satellites within and outside the cluster area as Gamma random variables, consolidating these compound terms into their respective single random variables.

The SIR at the typical user, located at $\mathbf{u}_1=(0,0,R_{\mathrm  E})$, is expressed as \black{[15, Appendix A]}
\begin{align} \label{SIR}
\mathrm{SIR}&=\frac{\sum_{\mathbf{x}_{l} \in \Pi \cap \mathcal {A}_{\mathrm{clu}}} G^{\mathrm t}_{\mathrm i} P {\mathsf H}_l||\mathbf{x}_{l}-\mathbf{u}_1||^{-\alpha}
}{\sum_{\mathbf{x}_{l} \in \Pi \cap \bar{\mathcal{A}}_{\mathrm{clu}}} G^{\mathrm t}_{\mathrm o} P {\mathsf H}_l ||\mathbf{x}_{l}-\mathbf{u}_1||^{-\alpha}},
\end{align}
where $\bar{\mathcal{A}}_{\mathrm{clu}}= \mathcal{A} \backslash \mathcal {A}_{\mathrm{clu}}$ denotes the outside of the cluster area, and $P$ represents the transmit power. In \eqref{SIR}, the numerator and denominator indicate the accumulated power from the satellites within and outside the cluster area. For analytical simplicity, we normalize these two terms by the transmit power~$P$. The accumulated cluster power is then characterized as
\begin{align}
\mathsf{D}=\sum_{ \mathbf{x}_{l} \in \Pi \cap \mathcal {A}_{\mathrm{clu}}} G^{\mathrm t}_{\mathrm i} {\mathsf H}_l||\mathbf{x}_{l}-\mathbf{u}_1||^{-\alpha},
\end{align}
while the accumulated interference power is given by
\begin{align}
\mathsf{I}=\sum_{\mathbf{x}_{l} \in \Pi \cap \bar{\mathcal{A}}_{\mathrm{clu}}} G^{\mathrm t}_{ \mathrm o} {\mathsf H}_l ||\mathbf{x}_{l}-\mathbf{u}_1||^{-\alpha}.
\end{align}

\subsection{Gamma Approximation} \label{gamma_approx}
Before deriving the coverage probability, we first approximate the statistics of the sum of random variables on the state space of a Poisson process. The suitability of the Gamma random variable for closely modeling the statistics of the accumulated interference power in terrestrial networks is substantiated by \cite{Haenggi:2009book}, \cite{Heath:2013}. Building upon these findings, we approximate both the accumulated interference power $\mathsf I$ and the accumulated cluster power $\mathsf D$ in the context of satellite networks.
\begin{proposition} \label{shape_and_scale}
The accumulated interference power $\mathsf I$ can be approximated as a Gamma random variable $\tilde{\mathsf I}$. The shape parameter $k_{{\tilde{\mathsf I}}}$ and scale parameter  $\theta_{{\tilde{\mathsf I}}}$ of $\tilde{\mathsf I}$ are obtained as 
\begin{align} \label{shape_I}
k_{\tilde{\mathsf I}}=\frac{4(\alpha-1)\pi \lambda_{\mathrm S} \frac{R_{\mathrm {S}}}{R_{\mathrm {E}}}}{(\alpha-2)^2\left(1+\frac{1}{m}\right)}\frac{(R_{\mathrm{clu}}^{-\alpha+2}-R_{\mathrm{max}}^{-\alpha+2})^2}{R_{\mathrm{clu}}^{-2\alpha+2}-R_{\mathrm{max}}^{-2\alpha+2}},
\end{align}

\begin{align} \label{scale_I}
\theta_{\tilde{\mathsf I}}=\frac{(\alpha-2)G^{\mathrm t}_{\mathrm o} }{2(\alpha-1)} \left(1+\frac{1}{m}\right)\frac{R_{\mathrm{clu}}^{-2\alpha+2}-R_{\mathrm{max}}^{-2\alpha+2}}{R_{\mathrm{clu}}^{-\alpha+2}-R_{\mathrm{max}}^{-\alpha+2}}.
\end{align}
\end{proposition}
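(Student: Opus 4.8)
The plan is to use the standard moment-matching recipe for Gamma approximation: if $\mathsf{I}$ is approximated by $\tilde{\mathsf I}\sim\Gamma(k,\theta)$, then matching the first two moments gives $k=\big(\mathbb{E}[\mathsf I]\big)^2/\mathrm{Var}[\mathsf I]$ and $\theta=\mathrm{Var}[\mathsf I]/\mathbb{E}[\mathsf I]$. So the entire task reduces to computing $\mathbb{E}[\mathsf I]$ and $\mathbb{E}[\mathsf I^2]$ (equivalently $\mathrm{Var}[\mathsf I]$) for the shot-noise sum $\mathsf{I}=\sum_{\mathbf{x}_l\in\Pi\cap\bar{\mathcal A}_{\mathrm{clu}}} G^{\mathrm t}_{\mathrm o}\,\mathsf{H}_l\,\|\mathbf{x}_l-\mathbf{u}_1\|^{-\alpha}$.

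The first key step is to convert the spherical point process into a one-dimensional process in the contact distance $r=\|\mathbf{x}_l-\mathbf{u}_1\|$. Using the Archimedes hat-box relation already invoked for \eqref{dome}, the SPPP of intensity $\lambda_{\mathrm S}$ on the sphere of radius $R_{\mathrm S}$, restricted to $\bar{\mathcal A}_{\mathrm{clu}}$, induces a PPP on the interval $[R_{\mathrm{clu}},R_{\mathrm{max}}]$ with linear intensity $\mu(r)=\lambda_{\mathrm S}\,\frac{2\pi R_{\mathrm S}}{R_{\mathrm E}}\,r$ (the derivative of the spherical-cap area with respect to $r$ is $2\pi R_{\mathrm S} r/R_{\mathrm E}$). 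The second step is to apply the Campbell/Campbell–Mecke formulas for PPPs: $\mathbb{E}[\mathsf I]=G^{\mathrm t}_{\mathrm o}\,\mathbb{E}[\mathsf H]\int_{R_{\mathrm{clu}}}^{R_{\mathrm{max}}} r^{-\alpha}\,\mu(r)\,dr$, and for the second moment the sum splits into the diagonal (single-point) term $\big(G^{\mathrm t}_{\mathrm o}\big)^2\,\mathbb{E}[\mathsf H^2]\int r^{-2\alpha}\mu(r)\,dr$ plus the off-diagonal term $\big(G^{\mathrm t}_{\mathrm o}\,\mathbb{E}[\mathsf H]\big)^2\big(\int r^{-\alpha}\mu(r)\,dr\big)^2$, the latter being exactly $\big(\mathbb{E}[\mathsf I]\big)^2$. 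Hence $\mathrm{Var}[\mathsf I]=\big(G^{\mathrm t}_{\mathrm o}\big)^2\,\mathbb{E}[\mathsf H^2]\int_{R_{\mathrm{clu}}}^{R_{\mathrm{max}}} r^{-2\alpha}\mu(r)\,dr$.

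The third step is to evaluate the two elementary integrals $\int r^{-\alpha}\cdot r\,dr=\int r^{1-\alpha}\,dr=\frac{r^{2-\alpha}}{2-\alpha}$ and $\int r^{1-2\alpha}\,dr=\frac{r^{2-2\alpha}}{2-2\alpha}$, evaluated between $R_{\mathrm{clu}}$ and $R_{\mathrm{max}}$, substitute $\mathbb{E}[\mathsf H]=1$ and $\mathbb{E}[\mathsf H^2]=1+\tfrac1m$ from \eqref{nak}, and then form the ratios for $k_{\tilde{\mathsf I}}$ and $\theta_{\tilde{\mathsf I}}$. One writes $2-\alpha=-(\alpha-2)$ and $2-2\alpha=-2(\alpha-1)$, so that $\mathbb{E}[\mathsf I]=\frac{2\pi\lambda_{\mathrm S} R_{\mathrm S}}{R_{\mathrm E}}\,G^{\mathrm t}_{\mathrm o}\,\frac{R_{\mathrm{clu}}^{-\alpha+2}-R_{\mathrm{max}}^{-\alpha+2}}{\alpha-2}$ and $\mathrm{Var}[\mathsf I]=\frac{2\pi\lambda_{\mathrm S} R_{\mathrm S}}{R_{\mathrm E}}\,\big(G^{\mathrm t}_{\mathrm o}\big)^2\big(1+\tfrac1m\big)\frac{R_{\mathrm{clu}}^{-2\alpha+2}-R_{\mathrm{max}}^{-2\alpha+2}}{2(\alpha-1)}$; plugging these into $k=\mathbb{E}[\mathsf I]^2/\mathrm{Var}[\mathsf I]$ and $\theta=\mathrm{Var}[\mathsf I]/\mathbb{E}[\mathsf I]$ and simplifying the constants (the $G^{\mathrm t}_{\mathrm o}$ cancels in $k$, leaving a linear factor in $\theta$; the factor $2\pi$, the $(\alpha-2)$ and $(\alpha-1)$ powers, and the $1+\tfrac1m$ rearrange) yields exactly \eqref{shape_I} and \eqref{scale_I}.

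I do not expect a serious obstacle; the main subtlety to get right is the change of variables from the sphere to the radial coordinate — specifically deriving $\mu(r)\,dr = \lambda_{\mathrm S}\,\frac{2\pi R_{\mathrm S}}{R_{\mathrm E}}\,r\,dr$ from the hat-box theorem and correctly identifying the integration limits $[R_{\mathrm{clu}},R_{\mathrm{max}}]$ for the interference region $\bar{\mathcal A}_{\mathrm{clu}}$ (as opposed to $[R_{\mathrm{min}},R_{\mathrm{clu}}]$ for the cluster region, which would give the companion result for $\mathsf D$). A secondary point worth stating explicitly is the justification that a two-moment Gamma fit is adequate here — this is the heuristic borrowed from \cite{Haenggi:2009book} and \cite{Heath:2013} and is not claimed to be exact, so the proposition should be read as defining the approximating $\tilde{\mathsf I}$ via moment matching rather than asserting a distributional identity. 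Everything else is bookkeeping with powers of $R_{\mathrm{clu}}$ and $R_{\mathrm{max}}$.
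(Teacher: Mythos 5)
Your proposal is correct and follows essentially the same route as the paper's proof: moment matching for the Gamma fit, Campbell's theorem for the mean and variance of the shot-noise sum, and the hat-box-derived radial intensity $2\pi\lambda_{\mathrm S}(R_{\mathrm S}/R_{\mathrm E})\,r$ on $[R_{\mathrm{clu}},R_{\mathrm{max}}]$, yielding the identical intermediate expressions for $\mathbb{E}[\mathsf I]$ and $\mathrm{Var}[\mathsf I]$. The only cosmetic difference is that you spell out the diagonal/off-diagonal split of the second moment, whereas the paper invokes Campbell's theorem for the variance directly.
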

\begin{proof}
See Appendix \ref{appendix_shape_and_scale}.
\end{proof}
\noindent We also consider the accumulated cluster power $\mathsf D$ as a Gamma random variable $\tilde{\mathsf D}$. The shape parameter $k_{\tilde{\mathsf D}}$ and scale parameter $\theta_{\tilde{\mathsf D}}$ \black{of $\tilde{\mathsf D}$} are given by
\begin{align}\label{shape_D}
k_{\tilde{\mathsf D}}=\frac{4(\alpha-1)\pi \lambda_{\mathrm S} \frac{R_{\mathrm {S}}}{R_{\mathrm {E}}}}{(\alpha-2)^2\left(1+\frac{1}{m}\right)}\frac{(R_{\mathrm{min}}^{-\alpha+2}-R_{\mathrm{clu}}^{-\alpha+2})^2}{R_{\mathrm{min}}^{-2\alpha+2}-R_{\mathrm{clu}}^{-2\alpha+2}},
\end{align}
\begin{align}\label{scale_D}
\theta_{\tilde{\mathsf D}}=\frac{(\alpha-2) G^{\mathrm t}_{\mathrm i} }{2(\alpha-1)}\left(1+\frac{1}{m}\right)\frac{R_{\mathrm{min}}^{-2\alpha+2}-R_{\mathrm{clu}}^{-2\alpha+2}}{R_{\mathrm{min}}^{-\alpha+2}-R_{\mathrm{clu}}^{-\alpha+2}}.
\end{align}
These parameters are obtained through the same process as in the aforementioned proposition. {As shown in \eqref{shape_I}-\eqref{scale_D}, the distributions of $\tilde{\mathsf D}$ and $\tilde{\mathsf I}$ are influenced by $R_{\mathrm{clu}}$, which determines the cluster area.}

\black{The impact of $R_{\mathrm{clu}}$ on {the accumulated cluster power and interference power} is elucidated through the derivative of $k_{\tilde{\mathsf D}}$, $\theta_{\tilde{\mathsf D}}$, $k_{\tilde{\mathsf I}}$, and $\theta_{\tilde{\mathsf I}}$, with respect to $R_{\mathrm{clu}}$. Initially, the derivative for $k_{\tilde{\mathsf D}}$ and its sign are derived as follows:
\begin{align} \label{diff_shape}
\frac{dk_{\tilde{\mathsf{D}}}}{dR_{\mathrm{clu}}}&=\frac{4(\alpha-1)\pi \lambda_{\mathrm S} \frac{R_{\mathrm {S}}}{R_{\mathrm {E}}}}{(\alpha-2)^2\left(1+\frac{1}{m}\right)} \frac{R_{\mathrm{min}}^{-\alpha+2}-R_{\mathrm{clu}}^{-\alpha+2}}{R_{\mathrm{min}}^{-2\alpha+2}-R_{\mathrm{clu}}^{-2\alpha+2}} \notag \\
&\hspace{-2.5 pc} \cdot \left(2(\alpha-2)R_{\mathrm{clu}}^{-\alpha+1}-\frac{R_{\mathrm{min}}^{-\alpha+2}-R_{\mathrm{clu}}^{-\alpha+2}}{R_{\mathrm{min}}^{-2\alpha+2}-R_{\mathrm{clu}}^{-2\alpha+2}}(2\alpha-2)R_{\mathrm{clu}}^{-2\alpha+1}\right) \notag \\
&\hspace{0 pc}=\frac{4(\alpha-1)\pi \lambda_{\mathrm S} \frac{R_{\mathrm {S}}}{R_{\mathrm {E}}}}{(\alpha-2)^2\left(1+\frac{1}{m}\right)}\frac{2R_{\mathrm{clu}}^{-2\alpha+1}\left(R_{\mathrm{min}}^{-\alpha+2}-R_{\mathrm{clu}}^{-\alpha+2}\right)}{\left(R_{\mathrm{min}}^{-2\alpha+2}-R_{\mathrm{clu}}^{-2\alpha+2}\right)^2} \notag \\
&\hspace{-2.5 pc} \cdot \underbrace{\left(R_{\mathrm{clu}}^{-\alpha+2}+(\alpha-2)R_{\mathrm{clu}}^{\alpha}R_{\mathrm{min}}^{-2\alpha+2}- (\alpha-1) R_{\mathrm{min}}^{-\alpha+2}\right)}_{{\text{X}}}.
\end{align}
The term X in \eqref{diff_shape} increases with $R_{\mathrm{clu}}$ since its derivative is positive, i.e., $\frac{d\text{X}}{dR_{\mathrm{clu}}}=(\alpha-2)R_{\mathrm{clu}}^{-\alpha+1}\left(\alpha(\frac{R_{\mathrm{clu}}}{R_{\mathrm{min}}})^{2\alpha-2}-1\right) >0$ for $R_{\mathrm{clu}}>R_{\mathrm{min}}$. Additionally, the value of X at $R_{\mathrm{clu}}=R_{\mathrm{min}}$, which marks its minimum value, is zero, verifying the positive sign of the term X. Thus, the sign of $\frac{dk_{\tilde{\mathsf{D}}}}{dR_{\mathrm{clu}}}$ is conclusively positive. This indicates that as the cluster area expands}, the shape parameter of $\tilde{\mathsf{D}}$ increases, {resulting in an increase in the average value of~$\tilde{\mathsf{D}}$}. \black{The derivative for $\theta_{\tilde{\mathsf D}}$ and its sign are derived as follows:
\begin{align} \label{diff_scale}
\frac{d \theta_{\tilde{\mathsf D}}}{dR_{\mathrm{clu}}}
&=\frac{(\alpha-2)G^{\mathrm t}_{\mathrm i} }{2(\alpha-1)} \left(1+\frac{1}{m}\right) \frac{1}{R_{\mathrm{min}}^{-\alpha+2}-R_{\mathrm{clu}}^{-\alpha+2}} \notag \\
&\hspace{-2.5 pc} \cdot \left((2\alpha-2)R_{\mathrm{clu}}^{-2\alpha+1}-\frac{R_{\mathrm{min}}^{-2\alpha+2}-R_{\mathrm{clu}}^{-2\alpha+2}}{R_{\mathrm{min}}^{-\alpha+2}-R_{\mathrm{clu}}^{-\alpha+2}}(\alpha-2)R_{\mathrm{clu}}^{-\alpha+1}\right) \notag \\
&\hspace{0 pc}=\frac{(\alpha-2)G^{\mathrm t}_{\mathrm i} }{2(\alpha-1)} \left(1+\frac{1}{m}\right) \frac{R_{\mathrm{clu}}^{-\alpha+1}}{(R_{\mathrm{min}}^{-\alpha+2}-R_{\mathrm{clu}}^{-\alpha+2})^2} \notag \\
&\hspace{-2.5 pc} \cdot \underbrace{\left((2\alpha-2)R_{\mathrm{clu}}^{-\alpha} R_{\mathrm{min}}^{-\alpha+2} -\alpha R_{\mathrm{clu}}^{-2\alpha+2}-(\alpha-2) R_{\mathrm{min}}^{-2\alpha+2}\right)}_{{\text{Y}}}.
\end{align}
\noindent The term Y in \eqref{diff_scale} decreases with $R_{\mathrm{clu}}$ since its derivative is negative, i.e., $\frac{d\text{Y}}{dR_{\mathrm{clu}}}=\alpha (2\alpha-2)R_{\mathrm{clu}}^{-2\alpha+1}\left(1-(\frac{R_{\mathrm{clu}}}{R_{\mathrm{min}}})^{\alpha-2}\right)<0$ for $R_{\mathrm{clu}}>R_{\mathrm{min}}$. Furthermore, the value of Y at $R_{\mathrm{clu}}=R_{\mathrm{min}}$, which represents its maximum value, is zero, confirming the negative sign of the term Y. Consequently, the sign of $\frac{d\theta_{\tilde{\mathsf{D}}}}{dR_{\mathrm{clu}}}$ is conclusively negative. This indicates that, as the cluster area expands, the scale parameter of {$\tilde{\mathsf{D}}$} decreases, leading to a distribution more concentrated around its mean.}

\black{Derivatives in \eqref{diff_shape} and \eqref{diff_scale} support the intuition that expanding the cluster area increases the accumulated cluster power by incorporating a larger number of satellites. This increase is not solely due to the additional satellites but also includes satellites that previously caused interference. Therefore, the expansion of the cluster area achieves both an increase in accumulated cluster power and a reduction in accumulated interference power. Similarly, analyzing the impact of expanding cluster area on the accumulated interference power using shape and scale parameters of $\tilde{\mathsf{I}}$ reveals that $k_{\tilde{\mathsf I}}$ and~$\theta_{\tilde{\mathsf I}}$ decrease as the cluster area expands, resulting in reduced accumulated interference power, which aligns with our {aforementioned} discussion and intuition.}

\black{{\textbf{Remark 3.}} An increase in $R_{\mathrm{clu}}$ enlarges the cluster area, improving the numerator and reducing the denominator of SIR due to enhanced cooperation among satellites within the increased cluster area. {However, this cooperation gain entails additional costs, such as {increased system complexity and a reduced probability of a user being accepted by a satellite due to limited resources}. To account for these impacts, postulating cost models and integrating them into performance analysis for {optimizing overall network performance} is necessary, representing a promising direction for future research.}}

\black{{\textbf{Remark 4.}} The Laplace transforms of the accumulated cluster power and interference power, {$\mathcal {L}_{\{\mathsf D\}}(s)=\mathbb{E}[e^{-s \mathsf{D}}]$} and {$\mathcal {L}_{\{\mathsf I\}}(s)=\mathbb{E}[e^{-s \mathsf{I}}]$}, respectively decrease and increase as the cluster area expands. These changes raise the lower and upper bounds of the coverage probability, thereby increasing the coverage probability positioned between them. This is evidenced by the reduction of the $n=0$ term in \eqref{cov_bound} caused by the decreasing Laplace transform of $\mathsf{D}$ and the increase in the $n=0$ term in \eqref{cov_bound_2} resulting from the increasing Laplace transform of $\mathsf{I}$.}

The coverage probability is derived using the statistical properties of \black{the Gamma random variables $\tilde{\mathsf I}$ or $\tilde{\mathsf D}$}. When deriving the coverage probability based on $\tilde{\mathsf I}$, \black{which approximates the accumulated interference power as the Gamma random variable}, we can directly incorporate \black{the accumulated cluster power $\mathsf D$} without employing Gamma approximation. On the contrary, when the coverage probability is based on $\tilde{\mathsf D}$, \black{which approximates the accumulated cluster power as the Gamma random variable}, we can consider \black{the accumulated interference power $\mathsf I$} directly without its approximation. For the following section, we elaborate on the coverage probability using $\tilde{\mathsf I}$.

\section{Coverage Probability}\label{sec3}
In this section, we propose the coverage probability of the satellite cluster network, with a focus on downlink satellite cooperation in the cluster area. We utilize the SIR that incorporates joint transmissions in the cluster area, as defined in Section \ref{metric}, and apply the Gamma approximation outlined in Section \ref{gamma_approx} to derive the lower and upper bounds for the coverage probability. \black{Given the distinct spatial regions between within and outside the cluster area, the mutual independence of the processes governing the distribution of satellites in these regions pertains.}

The coverage probability in the interference-limited regime is the probability that the SIR at the typical user is greater than or equal to a threshold $\gamma$. This probability is derived as 
\begin{align}
P^{\mathrm{cov}}(\gamma; \lambda_{\mathrm S}, R_{\mathrm {S}}, \phi_{\mathrm {clu}},\alpha,m) &=P(\mathrm{SIR}\geq \gamma ) \notag\\
&=P\left(\mathsf I\leq \frac{\mathsf D}{\gamma }  \right) \label{cov_inter}\\
&=P\left(\mathsf D \geq {\gamma  \mathsf I} \right)\label{cov_cluster},
\end{align}
where $\gamma$ is the SIR threshold. \black{{As per the definition, the coverage probabilities in Fig.~\ref{fig_comp_single} empirically demonstrate the performance enhancement of the satellite cluster network. Multiple satellites within the cluster, cooperatively serving the typical user, mitigate inter-satellite interference more efficiently than the single nearest satellite, with this enhancement increasing as the number of satellites grows.} Albeit the SIR coverage probability effectively captures the overall benefits of the cooperation gain, it does not account for the costs associated with satellite cooperation. {However, with well-defined cost metrics, the coverage probability framework facilitates an analysis of the trade-off introduced by satellite cooperation, as it serves as a basis for deriving practical spectral efficiency. Incorporating a well-defined cost parameter $C$, the practical spectral efficiency is expressed as \cite{Lee:2015}:
\begin{align}
\mathrm{SE}(C;\lambda_{\mathrm S}, R_{\mathrm {S}}, \phi_{\mathrm {clu}},\alpha,m) \notag \\
&\hspace{-8 pc}= (1 - C) \int_{0}^{\infty} \log_2(1 + \gamma) \, dP^{\mathrm{cov},c}(\gamma; \lambda_{\mathrm S}, R_{\mathrm {S}}, \phi_{\mathrm {clu}},\alpha,m) \notag \\
&\hspace{-8 pc}= (1 - C) \int_{0}^{\infty} \frac{P^{\mathrm{cov}}(\gamma; \lambda_{\mathrm S}, R_{\mathrm {S}}, \phi_{\mathrm {clu}},\alpha,m)}{(1 + \gamma) \ln 2} \, d\gamma,
\end{align}
where $P^{\mathrm{cov},c} =P(\mathrm{SIR} < \gamma )$ denotes the complement of the coverage probability $P^{\mathrm{cov}}$.}}

\begin{figure} [t!]
	\centering
	\includegraphics[width=0.9\columnwidth]{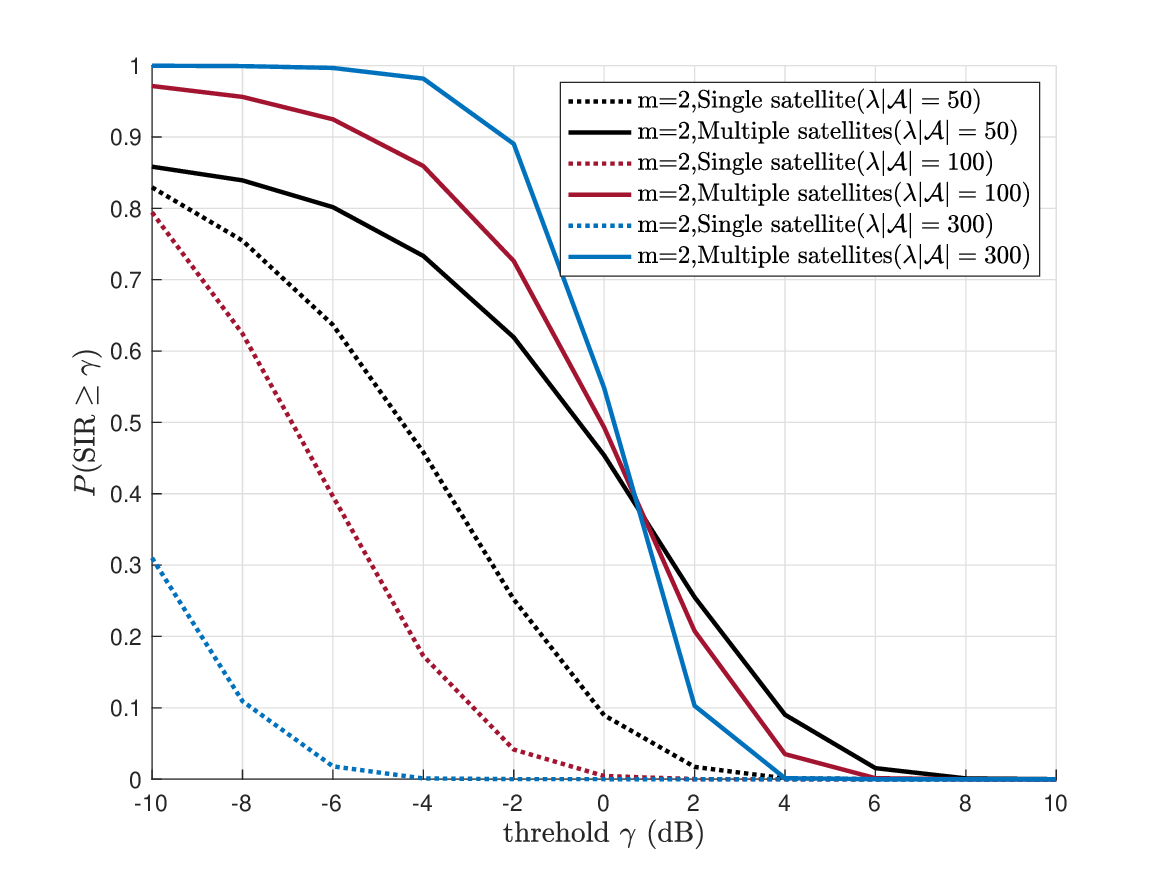}
	\caption{\black{Comparison of coverage probabilities when only the nearest satellite serves {the typical user} versus when multiple satellites in a cluster area serve {the typical user}, evaluated across $m=2$ and $\lambda_{\mathrm S}|\mathcal{A}| = 50, 100, 300$}.}\label{fig_comp_single}
\end{figure}

To derive the coverage probability based on \eqref{cov_inter}, we use the parameters given in \eqref{shape_I} and \eqref{scale_I} to consider the CDF of~$\tilde{\mathsf I}$. Similarly, when obtaining the coverage probability based on \eqref{cov_cluster}, we utilize the parameters provided in \eqref{shape_D} and \eqref{scale_D} to incorporate the complementary CDF of $\tilde{\mathsf D}$. For the remainder of this section, we opt for \eqref{cov_inter} to attain the coverage probability with the approximated Gamma random variable $\tilde{\mathsf I}$.

In order to obtain the coverage probability as per \eqref{cov_inter}, it is necessary to compute the Laplace transform of {$\mathsf D$}. This process commences with obtaining the conditional Laplace transform of $\mathsf D$, which is then extended to the unconditional case through marginalization over the Poisson distribution. Before obtaining the conditional Laplace transform, we introduce the PDFs of the distance between the typical user and a satellite located in the observable spherical dome $\mathcal{A}$ in the following lemma.
\begin{lemma} \label{pdf_dis}
When conditioned on whether the satellite $l$ is located within or outside the cluster area, i.e., $\mathbf{x}_l \in \mathcal{A}_{\mathrm{clu}}$ or $\mathbf{x}_l \in \bar{\mathcal{A}}_{\mathrm{clu}}$, the PDFs of the distance $\mathsf R=\lVert \mathbf{x}_{l}-\mathbf{u}_1\rVert$ are given~by
\begin{align} \label{pdf_clu}
f_{\mathsf R | \mathbf{x}_l \in \mathcal{A}_{\mathrm{clu}}}(r) = \frac{r}{R_\mathrm  E R_\mathrm  S (1-\cos\phi_{\mathrm{clu}})},
\end{align}
for $R_{\mathrm{min}} \leq r \leq R_{\mathrm{clu}}$, and 
\begin{align}
f_{\mathsf R | \mathbf{x}_l \in \bar{\mathcal{A}}_{\mathrm{clu}} }(r) \notag \\
&\hspace{-4 pc}= \frac{r}{ R_\mathrm  E \left(R_\mathrm  S \cos\phi_{\mathrm{clu}}-R_\mathrm  E-R_{\mathrm{max}}\cos(\pi/2-\theta_{\mathrm{min}}) \right) },
\end{align}
for $R_{\mathrm{clu}} < r \leq R_{\mathrm{max}}$, respectively.
\end{lemma}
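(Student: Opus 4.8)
The plan is to combine the ``uniform-on-the-region'' property of the homogeneous SPPP with a monotone change of variables. Since $\Pi$ is a homogeneous SPPP on the sphere of radius $R_{\mathrm S}$, conditioning on a satellite lying in a fixed region makes its position uniformly distributed over that region with respect to surface measure \cite{Haenggi:2012}. First I would place the typical user at $\mathbf{u}_1=(0,0,R_{\mathrm E})$ and parametrize the position of satellite $l$ by its zenith angle $\phi$ (measured at the Earth's center from the $\mathbf{u}_1$-axis) and azimuth $\psi$, so the surface element is $R_{\mathrm S}^2\sin\phi\,d\phi\,d\psi$. Integrating out $\psi$ and normalizing by $|\mathcal{A}_{\mathrm{clu}}| = 2\pi R_{\mathrm S}^2(1-\cos\phi_{\mathrm{clu}})$ yields the conditional zenith-angle density on the cluster, $f_{\Phi\mid \mathbf{x}_l\in\mathcal{A}_{\mathrm{clu}}}(\phi)=\sin\phi/(1-\cos\phi_{\mathrm{clu}})$ for $\phi\in[0,\phi_{\mathrm{clu}}]$; on $\bar{\mathcal{A}}_{\mathrm{clu}}$ the same argument gives $\sin\phi/(\cos\phi_{\mathrm{clu}}-\cos\phi_{\mathrm{max}})$ for $\phi\in(\phi_{\mathrm{clu}},\phi_{\mathrm{max}}]$, where $\phi_{\mathrm{max}}$ is the zenith angle associated with $R_{\mathrm{max}}$, i.e., $R_{\mathrm{max}}^2 = R_{\mathrm S}^2+R_{\mathrm E}^2-2R_{\mathrm S}R_{\mathrm E}\cos\phi_{\mathrm{max}}$.

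The second step is the change of variables $\phi\mapsto r$ dictated by the law of cosines $r^2 = R_{\mathrm S}^2+R_{\mathrm E}^2-2R_{\mathrm S}R_{\mathrm E}\cos\phi$, which is strictly monotone on $[0,\pi]$. Differentiating gives $r\,dr = R_{\mathrm S}R_{\mathrm E}\sin\phi\,d\phi$, hence $\lvert d\phi/dr\rvert = r/(R_{\mathrm S}R_{\mathrm E}\sin\phi)$, and the $\sin\phi$ factors cancel: $f_{\mathsf R\mid\mathbf{x}_l\in\mathcal{A}_{\mathrm{clu}}}(r) = f_{\Phi\mid\mathbf{x}_l\in\mathcal{A}_{\mathrm{clu}}}(\phi(r))\,\lvert d\phi/dr\rvert = r/\big(R_{\mathrm S}R_{\mathrm E}(1-\cos\phi_{\mathrm{clu}})\big)$, which is \eqref{pdf_clu}. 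The endpoints $\phi=0$ and $\phi=\phi_{\mathrm{clu}}$ map to $r=R_{\mathrm{min}}=R_{\mathrm S}-R_{\mathrm E}$ and $r=R_{\mathrm{clu}}$, giving the stated support; the identical computation on $\bar{\mathcal{A}}_{\mathrm{clu}}$ produces $f_{\mathsf R\mid\mathbf{x}_l\in\bar{\mathcal{A}}_{\mathrm{clu}}}(r)=r/\big(R_{\mathrm S}R_{\mathrm E}(\cos\phi_{\mathrm{clu}}-\cos\phi_{\mathrm{max}})\big)$ on $(R_{\mathrm{clu}},R_{\mathrm{max}}]$.

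The only step needing care is rewriting the constant $\cos\phi_{\mathrm{clu}}-\cos\phi_{\mathrm{max}}$ in the closed form of the lemma. Applying the law of cosines at the Earth's center gives $\cos\phi_{\mathrm{max}} = (R_{\mathrm S}^2+R_{\mathrm E}^2-R_{\mathrm{max}}^2)/(2R_{\mathrm S}R_{\mathrm E})$, while applying it at the user, with interior angle $\pi/2+\theta_{\mathrm{min}}$ (the same geometry underlying the definition \eqref{cos} of $R_{\mathrm{max}}$), gives the identity $R_{\mathrm S}^2 = R_{\mathrm E}^2+R_{\mathrm{max}}^2+2R_{\mathrm E}R_{\mathrm{max}}\sin\theta_{\mathrm{min}}$. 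Substituting the latter into the former and multiplying by $R_{\mathrm S}R_{\mathrm E}$ collapses $R_{\mathrm S}R_{\mathrm E}(\cos\phi_{\mathrm{clu}}-\cos\phi_{\mathrm{max}})$ to $R_{\mathrm E}\big(R_{\mathrm S}\cos\phi_{\mathrm{clu}}-R_{\mathrm E}-R_{\mathrm{max}}\cos(\pi/2-\theta_{\mathrm{min}})\big)$, using $\cos(\pi/2-\theta_{\mathrm{min}})=\sin\theta_{\mathrm{min}}$, which is exactly the denominator in the statement. I expect no genuine obstacle beyond this bookkeeping: the conceptual content is just the uniform-on-the-region property plus a monotone change of variables, and the supports follow from tracking the images of the angular endpoints.
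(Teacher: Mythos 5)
Your proof is correct and takes essentially the same route as the paper's: both rest on the uniform-on-the-region property of the conditioned SPPP point together with the law-of-cosines link between the distance $r$ and the cap geometry, differing only in that you transform the zenith-angle density $\sin\phi/(\text{normalizer})$ directly via the Jacobian $r\,dr=R_{\mathrm S}R_{\mathrm E}\sin\phi\,d\phi$, whereas the paper writes the conditional CDF as a ratio of spherical-cap areas and differentiates. Your closing identity $R_{\mathrm S}\cos\phi_{\mathrm{max}}=R_{\mathrm E}+R_{\mathrm{max}}\sin\theta_{\mathrm{min}}$ is consistent with \eqref{cos}, so the constant for the region $\bar{\mathcal{A}}_{\mathrm{clu}}$ reduces exactly to the lemma's denominator.
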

\begin{proof}
To derive the conditional PDF of $\mathsf R$, we let $\mathcal{A}(\mathsf R)$, $\mathsf R\in[R_{\mathrm{min}}, R_{\mathrm{max}}]$, denote a spherical dome whose maximum distance to the typical user is $R$. 
Using the fact that the surface area of $\mathcal{A}(\mathsf R)$, 
\begin{align}
|\mathcal{A}(\mathsf R)|=2\pi R_\mathrm  S \left(R_\mathrm  S-R_\mathrm  E -\frac{R_\mathrm  S^2-R_\mathrm  E^2 -\mathsf R^2}{2 R_\mathrm  E}\right),
\end{align}
is an increasing function of $\mathsf R$, the probability that $\mathsf R$ is less than or equal to $r$ is the same as the probability that 
the satellite $l$ is located in $\mathcal{A}(r)$. With this property,
the CDF of $\mathsf R$ for $R_{\mathrm{min}} \le r \le R_{\mathrm{clu}}$ is derived as 
\begin{align} \label{cdf_1}
F_{\mathsf R| \mathbf{x}_l \in \mathcal{A}_{\mathrm{clu}} }(r) 
&=P(\mathsf R \leq r| \mathbf{x}_l \in \mathcal{A}_{\mathrm{clu}}) \notag \\
&=\frac{P(\mathbf{x}_l \in \mathcal{A}(r), \mathbf{x}_l \in \mathcal{A}_{\mathrm{clu}})}{P(\mathbf{x}_l \in \mathcal{A}_{\mathrm{clu}})} \notag \\
&=\frac{P(\mathbf{x}_l \in \mathcal{A}(r))}{P(\mathbf{x}_l \in \mathcal{A}_{\mathrm{clu}})}.
\end{align}
Given that a satellite distributed according to a PPP has the same probability of existence at any position,
the probability that the satellite $l$ is located in $\mathcal{A}(r)$, i.e., $P(\mathbf{x}_l \in \mathcal{A}(r))$,
is obtained as the ratio between the surface areas of $\mathcal{A}(r)$ and the whole sphere, i.e., $\frac{|\mathcal{A}(r)|}{4\pi R_\mathrm  S^2}$. With this, the CDF is given by
\begin{align} \label{cdf_final_1}
F_{\mathsf R| \mathbf{x}_l \in \mathcal{A}_{\mathrm{clu}}}(r) 
\!=\!\frac{|\mathcal{A}(r)|}{|\mathcal{A}_{\mathrm{clu}}|}
\!=\!\frac{r^2-R_\mathrm  S^2-R_\mathrm  E^2 + 2{R_\mathrm  S}{R_\mathrm  E}}{2{R_\mathrm  E}\left(R_\mathrm  S-R_\mathrm  E -\frac{R_\mathrm  S^2-R_\mathrm  E^2 -R_{\mathrm{clu}}^2}{2 R_\mathrm  E}\right) }.
\end{align}
Similarly, the conditional CDF of $\mathsf R$ for $R_{\mathrm{clu}} < r \leq R_{\mathrm{max}}$ is given by
\begin{align} \label{cdf_final_2}
F_{\mathsf R| \mathbf{x}_l \in \bar{\mathcal A}_{\mathrm {clu}}}(r) \notag\\
&\hspace{-4.5 pc}=\frac{r^2-R_\mathrm  {clu}^2}{2{R_\mathrm  E}\left(\frac{R_{\mathrm  S}^2-R_{\mathrm  E}^2-R_{\mathrm{clu}}^2}{2 R_\mathrm  E}-R_{\mathrm{max}}\cos\left(\pi/2 -\theta_{\mathrm{min}}\right)\right) }.
\end{align}
The corresponding PDFs can be derived by differentiating \eqref{cdf_final_1} and \eqref{cdf_final_2} with respect to $r$, which completes the proof.
\end{proof}
From the result, we derive the conditional Laplace transform of $\mathsf D$. Let $\mathsf{N}_{\mathrm{S,clu}}$ denote the number of satellites in the cluster area $\mathcal{A}_{\mathrm{clu}}$. Fixing the number of cooperating satellites at $\mathsf{N}_{\mathrm{S,clu}}=L$, the locations of the $L$ satellites can be modeled using a BPP \cite{Sto:2013}. Then the conditional Laplace transform is given in the following lemma.
\begin{lemma} \label{lap_clu}
The conditional Laplace transform of the accumulated cluster power $\mathsf D$ is
\begin{align} \label{lap_clu_int}
	\mathcal {L}_{\{\mathsf D|\mathsf{N}_{\mathrm{S,clu}}=L\}}(s)=\Bigg(&\frac{1}{R_\mathrm  E R_\mathrm  S (1-\cos\phi_{\mathrm{clu}})\alpha} \notag \\
 &\hspace{-0.5 pc} \cdot \int_{R_{\mathrm{clu}}^{-\alpha}}^{R_{\mathrm{min}}^{-\alpha}} t^{-\frac{2}{\alpha}-1}\left(1+\frac{s G^{\mathrm t}_{\mathrm i} t}{m}\right)^{-m}dt\Bigg)^L.
\end{align} 
\end{lemma}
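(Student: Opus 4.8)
The plan is to compute $\mathcal{L}_{\{\mathsf D\mid\mathsf N_{\mathrm{S,clu}}=L\}}(s)=\mathbb{E}\!\left[e^{-s\mathsf D}\mid\mathsf N_{\mathrm{S,clu}}=L\right]$ by exploiting independence to factor it into a single‑satellite expectation raised to the $L$th power, and then evaluating that scalar expectation by conditioning successively on the distance and on the fading.

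First I would invoke the reduction noted immediately before the lemma: conditioned on $\{\mathsf N_{\mathrm{S,clu}}=L\}$, the $L$ satellite positions inside $\mathcal A_{\mathrm{clu}}$ are i.i.d. and uniformly distributed (with respect to the surface measure) on $\mathcal A_{\mathrm{clu}}$, i.e.\ they form a BPP~\cite{Sto:2013}. Hence the distances $\mathsf R_l=\lVert\mathbf x_l-\mathbf u_1\rVert$ are mutually independent with the common PDF $f_{\mathsf R\mid\mathbf x_l\in\mathcal A_{\mathrm{clu}}}$ from Lemma~\ref{pdf_dis}; moreover the fading powers $\mathsf H_l$ are i.i.d.\ Nakagami‑$m$ with PDF \eqref{nak} and independent of all positions. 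Writing $\mathsf D=\sum_{l=1}^{L}G^{\mathrm t}_{\mathrm i}\mathsf H_l\mathsf R_l^{-\alpha}$, independence of the summands gives $\mathcal{L}_{\{\mathsf D\mid\mathsf N_{\mathrm{S,clu}}=L\}}(s)=\bigl(\mathbb{E}[e^{-sG^{\mathrm t}_{\mathrm i}\mathsf H\mathsf R^{-\alpha}}]\bigr)^{L}$ for a generic pair $(\mathsf H,\mathsf R)$.

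Next I would evaluate the scalar expectation in two steps. Conditioning on $\mathsf R=r$ and using that the PDF \eqref{nak} is that of a $\Gamma(m,1/m)$ variable, whose moment generating function is $\mathbb{E}[e^{-u\mathsf H}]=(1+u/m)^{-m}$ for $u\ge 0$, yields $\mathbb{E}[e^{-sG^{\mathrm t}_{\mathrm i}\mathsf H r^{-\alpha}}\mid\mathsf R=r]=\bigl(1+sG^{\mathrm t}_{\mathrm i}r^{-\alpha}/m\bigr)^{-m}$. Averaging over $\mathsf R$ with the density of Lemma~\ref{pdf_dis} on $[R_{\mathrm{min}},R_{\mathrm{clu}}]$ then gives
\[
\mathbb{E}\!\left[e^{-sG^{\mathrm t}_{\mathrm i}\mathsf H\mathsf R^{-\alpha}}\right]=\int_{R_{\mathrm{min}}}^{R_{\mathrm{clu}}}\left(1+\frac{sG^{\mathrm t}_{\mathrm i}r^{-\alpha}}{m}\right)^{-m}\frac{r\,dr}{R_{\mathrm E}R_{\mathrm S}(1-\cos\phi_{\mathrm{clu}})}.
\]
Finally the substitution $t=r^{-\alpha}$, for which $r\,dr=-\tfrac{1}{\alpha}t^{-2/\alpha-1}\,dt$ and the endpoints $r=R_{\mathrm{min}},R_{\mathrm{clu}}$ map to $t=R_{\mathrm{min}}^{-\alpha},R_{\mathrm{clu}}^{-\alpha}$, converts this into the bracketed integral of \eqref{lap_clu_int}; raising to the $L$th power completes the derivation.

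The computation is essentially routine; the one place that needs care is the opening step — justifying that, after conditioning on the count, the point configuration is an i.i.d.\ (BPP) sample independent of the fading coefficients, which is precisely what makes the conditional Laplace transform factor into a product of identical terms. After that, the Nakagami MGF is a standard identity and the only subtlety is that $t=r^{-\alpha}$ is a \emph{decreasing} map, so the orientation reversal cancels the minus sign and leaves the limits ordered from $R_{\mathrm{clu}}^{-\alpha}$ to $R_{\mathrm{min}}^{-\alpha}$ as written.
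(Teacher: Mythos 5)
Your proof is correct and follows essentially the same route as the paper's: factor the conditional Laplace transform into the $L$th power of a single-satellite expectation via the BPP/i.i.d.\ reduction, average over the distance PDF of Lemma~\ref{pdf_dis}, and substitute $t=r^{-\alpha}$. The only difference is cosmetic — you invoke the known MGF $\mathbb{E}[e^{-u\mathsf H}]=(1+u/m)^{-m}$ of the $\Gamma(m,1/m)$ fading power directly, whereas the paper reaches the same factor $(1+sG^{\mathrm t}_{\mathrm i}t/m)^{-m}$ by writing out the double integral and eliminating the inner $h$-integral with the change of variable $v=h(sG^{\mathrm t}_{\mathrm i}t+m)$.
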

\begin{proof}
See Appendix \ref{appendix_lap}.
\end{proof}

In solving the integral in the conditional Laplace transform, computation complexity may arise. However, assuming the Rayleigh fading for satellite channels, i.e., $m=1$, reduces the integral to a \black{{compact solution}}. The simplified conditional Laplace transform for $m=1$ is obtained in the following corollary.
\begin{corollary}
Using the integral result from \cite{integral}, a simpler expression for the conditional Laplace transform of the accumulated cluster power $\mathsf{D}$, specifically when $m=1$, is given~by 
\begin{align} \label{lap_m=1_closed}
\mathcal {L}^{m=1}_{\{\mathsf D|\mathsf{N}_{\mathrm{S,clu}}=L\}}(s) \notag\\
&\hspace{-5 pc}=\Bigg[\frac{1}{R_\mathrm{E}R_\mathrm{S}(1-\cos\phi_{\mathrm{clu}})\alpha} \notag \\
&\hspace{-4.5 pc} \cdot \Bigg(\frac{R_{\mathrm{clu}}^{2+\alpha}}{sG_\mathrm{i}^{\mathrm t}(1+\frac{2}{\alpha})}
{}_2F_1\left(1, 1+\frac{2}{\alpha};2+\frac{2}{\alpha};-\frac{R_{\mathrm{clu}}^\alpha}{sG_\mathrm{i}^{\mathrm t}}\right) \notag\\
&\hspace{-4.5 pc}-\frac{R_{\mathrm{min}}^{2+\alpha}}{sG_\mathrm{i}^{\mathrm t}(1+\frac{2}{\alpha})} {}_2F_1\left(1, 1+\frac{2}{\alpha};2+\frac{2}{\alpha};-\frac{R_{\mathrm{min}}^\alpha}{sG_\mathrm{i}^{\mathrm t}}\right)\Bigg)\Bigg]^L.
\end{align}
\end{corollary}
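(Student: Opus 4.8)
The plan is to start from the integral form of the conditional Laplace transform established in Lemma~\ref{lap_clu}, namely
\begin{align*}
\mathcal {L}_{\{\mathsf D|\mathsf{N}_{\mathrm{S,clu}}=L\}}(s)=\left(\frac{1}{R_\mathrm  E R_\mathrm  S (1-\cos\phi_{\mathrm{clu}})\alpha} \int_{R_{\mathrm{clu}}^{-\alpha}}^{R_{\mathrm{min}}^{-\alpha}} t^{-\frac{2}{\alpha}-1}\left(1+\frac{s G^{\mathrm t}_{\mathrm i} t}{m}\right)^{-m}dt\right)^L,
\end{align*}
and specialize it to $m=1$, so that the integrand becomes $t^{-2/\alpha-1}\bigl(1+sG^{\mathrm t}_{\mathrm i}t\bigr)^{-1}$. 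The whole task then reduces to evaluating the single-variable integral $\int t^{-2/\alpha-1}(1+sG^{\mathrm t}_{\mathrm i}t)^{-1}\,dt$ in closed form and substituting the limits $t=R_{\mathrm{clu}}^{-\alpha}$ and $t=R_{\mathrm{min}}^{-\alpha}$.

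First I would identify the antiderivative. The standard tabulated identity (the ``integral result from \cite{integral}'' referenced in the statement) is
\begin{align*}
\int x^{\mu-1}(1+\beta x)^{-\nu}\,dx = \frac{x^{\mu}}{\mu}\,{}_2F_1\!\left(\nu,\mu;\mu+1;-\beta x\right),
\end{align*}
valid as a formal/convergent identity for the relevant parameter ranges. Here I would match $\mu-1=-2/\alpha-1$, i.e.\ $\mu=-2/\alpha$, which looks troublesome because $\mu<0$; so instead I would first perform the substitution $t\mapsto 1/u$ (or equivalently rewrite $t^{-2/\alpha-1}=t^{(1+2/\alpha)-2}\cdot t^{\,?}$) to turn the exponent into a positive one. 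Concretely, writing the integrand in the form that matches $x^{\mu-1}$ with $\mu=1+2/\alpha$ after a reciprocal substitution, one gets a factor $\beta=1/(sG^{\mathrm t}_{\mathrm i})$ and the hypergeometric argument $-R^{\alpha}/(sG^{\mathrm t}_{\mathrm i})$, which matches the target expression. I would then carefully track how the reciprocal substitution swaps the two limits and flips the overall sign, so that the $R_{\mathrm{clu}}$ term appears with a plus sign and the $R_{\mathrm{min}}$ term with a minus sign, exactly as written in \eqref{lap_m=1_closed}. After evaluating at the two endpoints and collecting the prefactor $\bigl[R_\mathrm  E R_\mathrm  S(1-\cos\phi_{\mathrm{clu}})\alpha\bigr]^{-1}$, raising to the $L$-th power gives the claimed formula.

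The main obstacle, and the step requiring the most care, is getting the substitution and the parameters of ${}_2F_1$ exactly right: the raw integral has a \emph{negative} power $t^{-2/\alpha-1}$, so a naive application of the $x^{\mu-1}(1+\beta x)^{-\nu}$ formula is invalid, and one must either use the reciprocal substitution or an alternative tabulated form and then reconcile the resulting hypergeometric parameters $\bigl(1,1+\tfrac2\alpha;2+\tfrac2\alpha;\cdot\bigr)$ and the argument $-R^\alpha/(sG^{\mathrm t}_{\mathrm i})$ with the target. A secondary bookkeeping point is the sign and ordering of the limits after substitution, which is why the final expression has the $R_{\mathrm{clu}}$ contribution minus the $R_{\mathrm{min}}$ contribution rather than the other way around. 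Once the antiderivative is pinned down and the endpoints substituted, the remaining algebra (pulling out the constant prefactor, exponentiating by $L$) is routine, so I would not belabor it; the proof is essentially a one-line integral evaluation plus a change-of-variables sanity check.
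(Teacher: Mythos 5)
Your proposal is correct and follows essentially the same route as the paper: set $m=1$ in Lemma~\ref{lap_clu} so the integrand becomes $t^{-2/\alpha-1}(1+sG_{\mathrm i}^{\mathrm t}t)^{-1}$, then evaluate the resulting definite integral as a difference of two ${}_2F_1$ terms. The only cosmetic difference is that the paper directly invokes the tabulated tail-integral identity [48, Eq.\ 3.194.2.6] (which handles the negative exponent $\mu=-2/\alpha$ as written), whereas you re-derive the same antiderivative by a reciprocal substitution that maps the exponent to $\mu=1+2/\alpha>0$ --- a valid and correctly executed alternative that lands on the identical parameters $\bigl(1,1+\tfrac{2}{\alpha};2+\tfrac{2}{\alpha};-R^{\alpha}/(sG_{\mathrm i}^{\mathrm t})\bigr)$ and sign convention.
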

\begin{proof}
By setting $m=1$ in \eqref{lap_clu_int}, the conditional Laplace transform of the accumulated cluster power is reduced to 
\begin{align} \label{lap_m=1}
\mathcal {L}^{m=1}_{\{\mathsf D|\mathsf{N}_{\mathrm{S,clu}}=L\}}(s) \notag\\
&\hspace{-4.5 pc}=\Bigg(\frac{1}{R_\mathrm{E}R_\mathrm{S}(1-\cos\phi_{\mathrm{clu}})\alpha}
\cdot \int_{R_{\mathrm{clu}}^{-\alpha}}^{R_{\mathrm{min}}^{-\alpha}}\frac{t^{-\frac{2}{\alpha}-1}}{1+sG_\mathrm{i}^{\mathrm t} t}dt \Bigg)^L.
\end{align}
The final expression in \eqref{lap_m=1_closed} can be obtained by {[48, Eq. 3.194.2.6]}, which completes the proof.
\end{proof}

By marginalizing the conditional Laplace transform over $L$, the Laplace transform of the accumulated cluster power $\mathsf D$ is obtained as
\begin{align} \label{lap_cluster}
\mathcal{L}_{\{\mathsf {D}\}} (s) \notag\\
&\hspace{-1.5 pc}\stackrel{(a)}{=}\exp({-\lambda_{\mathrm S} {|\mathcal{A}_{\mathrm{clu}}|}}) \sum_{L=0}^{\infty} \frac{(\lambda_{\mathrm S} |\mathcal{A}_{\mathrm{clu}}|)^L}{L!}\mathcal{L}_{\{\mathsf D|\mathsf{N}_{\mathrm{S,clu}}=L\}}(s) \notag \\
&\hspace{-1.5 pc}\stackrel{(b)}{=}\exp \left(-\lambda_{\mathrm S} |\mathcal{A}_{\mathrm{clu}}| \left(1-\mathcal{L}^{\frac{1}{L}}_{\{\mathsf D|\mathsf{N}_{\mathrm{S,clu}}=L\}}(s)\right)\right),
\end{align}
where (a) is obtained using the Poisson distribution, and (b) is determined through the Taylor series expansion of the exponential function, $ e^x=\sum_{n=0}^{\infty} \frac{x^n}{n!}$. It is noteworthy that the conditional Laplace transform in \eqref{lap_cluster} becomes independent of $L$ through the ${L}$th root operation on \eqref{lap_m=1}.

Due to the non-integer property of \black{the shape parameters of $\tilde{\mathsf I}$ and $\tilde{\mathsf D}$, denoted $k_{\tilde{\mathsf I}}$ and $k_{\tilde{\mathsf D}}$ respectively}, obtaining the exact form of the coverage probability entails the utilization of the Gamma random variable's CDF, which involves infinite summations. These infinite operations impose computational complexity and make the mathematical analysis of the results challenging. To address this issue without compromising generality, we proceed to establish the lower and upper bounds of the coverage probability using the Erlang distribution while considering a finite number of terms $\tilde{k}$, \black{the rounded value of the shape parameter}, in both differentiation and summation \cite{Erlang}. Using the definition of the coverage probability \black{based on $\tilde{\mathsf I}$, which approximates the accumulated interference power as a Gamma random variable as outlined in \eqref{cov_inter}, and the Laplace transform of the accumulated cluster power $\mathsf {D}$, derived in \eqref{lap_cluster}}, we establish the coverage probability bounds in the following theorem. It is noteworthy that the lower and upper bounds, which will be derived shortly, coincide when the shape parameter, $k_{\tilde{\mathsf I}}$ or $k_{\tilde{\mathsf D}}$, is an integer.
\begin{theorem} \label{cov}
The lower and upper bounds of the coverage probability, derived from \eqref{cov_inter} with an integer parameter $\tilde{k}$, are obtained as
\begin{align}  \label{cov_bound}
P^{\mathrm{cov}}(\gamma; \lambda_{\mathrm S}, R_{\mathrm {S}}, \phi_{\mathrm{clu}},\alpha,m)&=\mathbb{E}\left[P\left({\mathsf I} \leq \frac{\mathsf D}{\gamma} \right)\right] \notag \\
&\hspace {-9 pc} {{\tilde{k}= \lceil k_{\tilde{\mathsf I}} \rceil\atop\geq}\atop{<\atop \tilde{k}=\lfloor k_{\tilde{\mathsf I}} \rfloor}} 1-\sum_{n=0}^{\tilde{k}-1}\frac{(\gamma\theta_{\tilde{\mathsf I}})^{-n}}{n!}(-1)^n \frac{d^n \mathcal{L}_{\{\mathsf D\}} (s)}{ds^n}\Bigg|_{s=\frac{1}{\gamma \theta_{\tilde{\mathsf I}}}}.
\end{align}
\end{theorem}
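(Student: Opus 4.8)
The plan is to begin from the second expression in \eqref{cov_inter}, $P^{\mathrm{cov}} = P(\mathsf{I} \le \mathsf{D}/\gamma)$, invoke the Gamma approximation $\mathsf{I}\approx\tilde{\mathsf{I}}\sim\Gamma(k_{\tilde{\mathsf I}},\theta_{\tilde{\mathsf I}})$ from Proposition~\ref{shape_and_scale}, and condition on the accumulated cluster power $\mathsf{D}$, which is kept exact since the satellite processes inside and outside $\mathcal{A}_{\mathrm{clu}}$ are mutually independent. Writing $F_{k,\theta}(\cdot)$ for the $\Gamma(k,\theta)$ CDF, this yields $P^{\mathrm{cov}} = \mathbb{E}_{\mathsf{D}}[F_{k_{\tilde{\mathsf I}},\theta_{\tilde{\mathsf I}}}(\mathsf{D}/\gamma)]$, i.e.\ the mean over $\mathsf{D}$ of a regularized lower incomplete gamma function evaluated in its shape argument $k_{\tilde{\mathsf I}}$.

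Since $k_{\tilde{\mathsf I}}$ is in general non-integer, the next step is to sandwich $F_{k_{\tilde{\mathsf I}},\theta_{\tilde{\mathsf I}}}$ between two Erlang CDFs using the monotonicity of the regularized incomplete gamma function $P(k,x)=\gamma(k,x)/\Gamma(k)$: as the CDF of a $\Gamma(k,1)$ variable, it is strictly decreasing in $k$ for fixed $x>0$. Hence $F_{\lceil k_{\tilde{\mathsf I}}\rceil,\theta_{\tilde{\mathsf I}}}(\cdot)\le F_{k_{\tilde{\mathsf I}},\theta_{\tilde{\mathsf I}}}(\cdot)\le F_{\lfloor k_{\tilde{\mathsf I}}\rfloor,\theta_{\tilde{\mathsf I}}}(\cdot)$ pointwise, and taking $\mathbb{E}_{\mathsf{D}}[\cdot]$ preserves both inequalities, producing the lower bound for $\tilde{k}=\lceil k_{\tilde{\mathsf I}}\rceil$ and the upper bound for $\tilde{k}=\lfloor k_{\tilde{\mathsf I}}\rfloor$. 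When $k_{\tilde{\mathsf I}}$ is an integer the two surrogates equal $k_{\tilde{\mathsf I}}$ and the two bounds collapse to equality, which proves the closing remark of the theorem.

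It then remains to evaluate $\mathbb{E}_{\mathsf{D}}[F_{\tilde{k},\theta_{\tilde{\mathsf I}}}(\mathsf{D}/\gamma)]$ for integer $\tilde{k}$. Here the Erlang CDF has the finite-sum form $F_{\tilde{k},\theta}(x)=1-\sum_{n=0}^{\tilde{k}-1}\frac{1}{n!}e^{-x/\theta}(x/\theta)^n$; substituting $x=\mathsf{D}/\gamma$, exchanging the finite sum with $\mathbb{E}_{\mathsf{D}}[\cdot]$ by linearity, and factoring out the deterministic $(\gamma\theta_{\tilde{\mathsf I}})^{-n}$ leaves $1-\sum_{n=0}^{\tilde{k}-1}\frac{(\gamma\theta_{\tilde{\mathsf I}})^{-n}}{n!}\mathbb{E}[\mathsf{D}^n e^{-\mathsf{D}/(\gamma\theta_{\tilde{\mathsf I}})}]$. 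The last step is to identify $\mathbb{E}[\mathsf{D}^n e^{-s\mathsf{D}}]=(-1)^n\frac{d^n}{ds^n}\mathbb{E}[e^{-s\mathsf{D}}]=(-1)^n\frac{d^n\mathcal{L}_{\{\mathsf D\}}(s)}{ds^n}$, which is legitimate because $\mathsf{D}$ is a sum over an a.s.\ finite (Poisson-distributed) number of terms with distances bounded below by $R_{\mathrm{min}}>0$, so all its moments are finite and $\mathcal{L}_{\{\mathsf D\}}$ is smooth; evaluating at $s=1/(\gamma\theta_{\tilde{\mathsf I}})$ with $\mathcal{L}_{\{\mathsf D\}}$ from \eqref{lap_cluster} gives \eqref{cov_bound}.

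The main obstacle I anticipate is pinning down the shape-parameter monotonicity of the incomplete gamma function and confirming which of $\lceil\cdot\rceil$, $\lfloor\cdot\rfloor$ produces the lower versus the upper bound; by contrast, the interchange of differentiation and expectation and the Erlang-CDF algebra are routine once the moments of $\mathsf{D}$ are known to be finite. A secondary point worth stating carefully is that these are bounds on the Gamma-approximated coverage probability, the surrogate $\tilde{\mathsf I}$ being introduced before, not after, the bounding argument.
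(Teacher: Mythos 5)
Your proposal is correct and follows essentially the same route as the paper's proof in Appendix~C: sandwich the Gamma CDF of $\tilde{\mathsf I}$ between Erlang CDFs with shape parameters $\lceil k_{\tilde{\mathsf I}}\rceil$ and $\lfloor k_{\tilde{\mathsf I}}\rfloor$, expand the Erlang CDF as a finite sum, and convert $\mathbb{E}[\mathsf D^n e^{-s\mathsf D}]$ into the $n$th derivative of $\mathcal{L}_{\{\mathsf D\}}(s)$. The only (welcome) additions are that you justify the shape-parameter monotonicity directly rather than citing it, and you explicitly flag that the bounds apply to the Gamma-approximated coverage probability.
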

\begin{proof}
See Appendix \ref{appendix_cov}.
\end{proof}
The coverage probability based on \eqref{cov_cluster} then incorporates~$\tilde{\mathsf D}$, \black{which approximates the accumulated cluster power as a Gamma random variable}, and the Laplace transform of \black{the accumulated interference power $\mathsf I$}. More detailed results are presented in the following theorem.
\begin{theorem} \label{cov_clu_theorem}
The lower and upper bounds of the coverage probability using \eqref{cov_cluster} with an integer parameter $\tilde{k}$, are given~by
\begin{align}  \label{cov_bound_2}
P^{\mathrm{cov}}(\gamma; \lambda_{\mathrm S}, R_{\mathrm {S}}, \phi_{\mathrm{clu}},\alpha,m)&=\mathbb{E}\left[P\left(\mathsf D \geq {\gamma  \mathsf I}  \right)\right] \notag \\
&\hspace {-7 pc} {{\tilde{k}= \lfloor k_{\tilde{\mathsf D}} \rfloor\atop\geq}\atop{<\atop \tilde{k}=\lceil k_{\tilde{\mathsf D}} \rceil}}\sum_{n=0}^{\tilde{k}-1}\frac{\gamma^n}{n! \left(\theta_{\tilde{\mathsf D}}\right)^{n} }(-1)^n \frac{d^n \mathcal{L}_{\{\mathsf I\}} (s)}{ds^n}\Bigg|_{s=\frac{\gamma}{\theta_{\tilde{\mathsf D}}}}.
\end{align}
\end{theorem}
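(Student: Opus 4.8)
The plan is to follow the same route as the proof of Theorem~\ref{cov}, but with the cluster power and the interference power swapping roles. Starting from the form \eqref{cov_cluster}, I replace the accumulated cluster power $\mathsf D$ by its Gamma surrogate $\tilde{\mathsf D}\sim\Gamma(k_{\tilde{\mathsf D}},\theta_{\tilde{\mathsf D}})$ with the shape and scale of \eqref{shape_D}--\eqref{scale_D}, keep the accumulated interference power $\mathsf I$ exact, condition on the interference-side point process, and write
\begin{align}
P^{\mathrm{cov}}=\mathbb{E}_{\mathsf I}\!\left[P\!\left(\tilde{\mathsf D}\geq\gamma\mathsf I\,\big|\,\mathsf I\right)\right]=\mathbb{E}_{\mathsf I}\!\left[\frac{\Gamma(k_{\tilde{\mathsf D}},\,\gamma\mathsf I/\theta_{\tilde{\mathsf D}})}{\Gamma(k_{\tilde{\mathsf D}})}\right],
\end{align}
so the problem reduces to evaluating an expected regularized upper incomplete Gamma function whose shape $k_{\tilde{\mathsf D}}$ is in general non-integer.

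Because that survival function carries an infinite series for non-integer shape, I would sandwich it by replacing $k_{\tilde{\mathsf D}}$ with the integers $\tilde k=\lfloor k_{\tilde{\mathsf D}}\rfloor$ and $\tilde k=\lceil k_{\tilde{\mathsf D}}\rceil$. The key fact that makes this rigorous is that $\Gamma(k,\theta)$ is stochastically increasing in its shape parameter for fixed scale --- equivalently, the ratio $\Gamma(k,x)/\Gamma(k)$ is nondecreasing in $k$ for every fixed $x>0$ --- so $P(\tilde{\mathsf D}\geq\gamma\mathsf I\mid\mathsf I)$ lies between the two corresponding Erlang tails, and this ordering survives the expectation over $\mathsf I$; rounding down yields the lower bound and rounding up the upper bound, matching the inequality directions in \eqref{cov_bound_2}. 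For integer shape $\tilde k$ and scale $\theta_{\tilde{\mathsf D}}$ the survival function is the elementary $e^{-x/\theta_{\tilde{\mathsf D}}}\sum_{n=0}^{\tilde k-1}(x/\theta_{\tilde{\mathsf D}})^n/n!$, hence
\begin{align}
\mathbb{E}_{\mathsf I}\!\left[e^{-\gamma\mathsf I/\theta_{\tilde{\mathsf D}}}\sum_{n=0}^{\tilde k-1}\frac{(\gamma\mathsf I/\theta_{\tilde{\mathsf D}})^n}{n!}\right]=\sum_{n=0}^{\tilde k-1}\frac{1}{n!}\left(\frac{\gamma}{\theta_{\tilde{\mathsf D}}}\right)^{n}\mathbb{E}_{\mathsf I}\!\left[\mathsf I^{\,n}e^{-\gamma\mathsf I/\theta_{\tilde{\mathsf D}}}\right].
\end{align}

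To close the argument I would invoke the moment identity $\mathbb{E}[\mathsf I^{\,n}e^{-s\mathsf I}]=(-1)^n\,d^n\mathcal{L}_{\{\mathsf I\}}(s)/ds^n$ evaluated at $s=\gamma/\theta_{\tilde{\mathsf D}}$, which turns the right-hand side above into exactly \eqref{cov_bound_2}; interchanging $\mathbb{E}_{\mathsf I}$ with the finite sum and with the $n$-th derivative is legitimate since $\mathsf I$ is a sum over the bounded distance range $[R_{\mathrm{clu}},R_{\mathrm{max}}]$ with Nakagami-$m$ fading, so all moments are finite and $\mathcal{L}_{\{\mathsf I\}}$ is smooth on $(0,\infty)$. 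The last missing piece is $\mathcal{L}_{\{\mathsf I\}}(s)$ itself, obtained exactly as in Lemma~\ref{lap_clu} and \eqref{lap_cluster}: conditioning on the number of satellites in $\bar{\mathcal{A}}_{\mathrm{clu}}$ makes their positions a BPP, the per-interferer factor is $\mathbb{E}\big[(1+sG^{\mathrm t}_{\mathrm o}\mathsf H_l r^{-\alpha})^{-m}\big]$ averaged against the conditional distance PDF $f_{\mathsf R\mid\mathbf{x}_l\in\bar{\mathcal{A}}_{\mathrm{clu}}}$ of Lemma~\ref{pdf_dis}, and marginalizing over the Poisson count as in \eqref{lap_cluster} removes the dependence on the number of interferers.

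I expect the only genuinely delicate point to be the shape-monotonicity step that decides which rounding gives the lower and which the upper bound; the remainder is the same bookkeeping as in the proof of Theorem~\ref{cov} under the substitutions $\mathsf D\leftrightarrow\mathsf I$, $\tilde{\mathsf I}\leftrightarrow\tilde{\mathsf D}$, a complementary CDF in place of a CDF, and the evaluation point moved from $1/(\gamma\theta_{\tilde{\mathsf I}})$ to $\gamma/\theta_{\tilde{\mathsf D}}$.
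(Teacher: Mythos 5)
Your proposal is correct and follows essentially the same route as the paper: the paper proves Theorem~\ref{cov_clu_theorem} by simply mirroring the proof of Theorem~\ref{cov}, i.e., sandwiching the Gamma surrogate $\tilde{\mathsf D}$ between Erlang distributions with shapes $\lfloor k_{\tilde{\mathsf D}}\rfloor$ and $\lceil k_{\tilde{\mathsf D}}\rceil$ (the monotonicity in the shape parameter that you spell out is the content of the cited bound \eqref{cdf}), then converting $\mathbb{E}[\mathsf I^{\,n}e^{-s\mathsf I}]$ into derivatives of $\mathcal{L}_{\{\mathsf I\}}(s)$ at $s=\gamma/\theta_{\tilde{\mathsf D}}$. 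Your added justifications (the shape-monotonicity argument and the interchange of expectation with the finite sum and derivatives) are consistent with, and slightly more explicit than, what the paper writes.
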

\begin{proof}
The proof can be easily derived using the same principles as in Theorem \ref{cov}.
\end{proof}
From the above results, we can see that the coverage probability depends on various system parameters, including threshold $\gamma$, satellite density $\lambda_{\mathrm S}$, satellite altitude $R_\mathrm {S}$, cluster area determined by $\phi_{\mathrm{clu}}$, path loss exponent $\alpha$, and the Nakagami parameter $m$. The primary distinction between adopting $\tilde{\mathsf I}$ or~$\tilde{\mathsf D}$ pertains to their respective shape parameters, which are determined by the difference in distances, as shown in \eqref{shape_I} and \eqref{shape_D}. Notably, the shape parameter for $\tilde{\mathsf I}$ is usually higher than that of $\tilde{\mathsf D}$ because the difference between $R_{\mathrm{clu}}$ and $R_{\mathrm{max}}$ is much larger than the difference between $R_{\mathrm{min}}$ and $R_{\mathrm{clu}}$ particularly when considering a small cluster polar angle~$\phi_{\mathrm{clu}}$. As shown, $\tilde{k}$ determines the extent of differentiation and summation required in the coverage probability calculation. The large shape parameter of $\tilde{\mathsf I}$ then narrows the gap between the lower and upper bounds of the coverage probability but increases the computational burden. Conversely, employing a small shape parameter with $\tilde{\mathsf D}$ can potentially result in a wider gap between the two bounds compared to $\tilde{\mathsf I}$, but it enables the analyses of network with a massive number of satellites while reducing the computational load. More details about this trade-off can be found in simulation results.

We further simplify the derivative of the Laplace transform of \black{the accumulated cluster power $\mathcal{L}_{\{\mathsf D\}} (s)$,} by leveraging the differentiation property of the exponential function. Utilizing Faà di Bruno's formula \cite{bell}, the $n$th derivative of $\mathcal{L}_{\{\mathsf D\}} (s)$ is expressed as
\begin{align} \label{result_bell}
\frac{d^n \mathcal{L}_{\{\mathsf D\}} (s)}{ds^n} =\sum_{q=1}^{n} \exp({\log\mathcal{L}_{\{\mathsf D\}}(s)}) \notag\\
&\hspace {-9.6 pc} \cdot B_{n,q}\left( \frac{d\log\mathcal{L}_{\{\mathsf D\}}(s)}{ds},\cdots, \frac{d^{n-q+1}\log\mathcal{L}_{\{\mathsf D\}}(s)}{ds^{n-q+1}} \right), 
\end{align}
where the Bell polynomial is defined as ${B_{n,q}(x_1,\ldots, x_{n-q+1})}$ ${=\sum \frac{n!}{a_1!\cdots a_{n-q+1}!} \left(\frac{x_1}{1!}\right)^{a_1} \cdots \left(\frac{x_{n-q+1}}{(n-q+1)!}\right)^{a_{n-q+1}}}$, and the sum is taken over all sequences $a_1, a_2, \ldots, a_{n-q+1}$ of non-negative integers satisfying ${a_1+a_2+\cdots+a_{n-q+1}=q}$.
Then, we can rewrite the lower and upper bounds of the coverage probability as follows
\begin{align} \label{cov_bell}
P^{\mathrm{cov}}(\gamma; \lambda_{\mathrm S}, R_{\mathrm {S}}, \phi_{\mathrm{clu}},\alpha,m) \notag \\
&\hspace {-9 pc} {{\tilde{k}= \lceil k_{\tilde{\mathsf I}} \rceil\atop\geq}\atop{<\atop \tilde{k}=\lfloor k_{\tilde{\mathsf I}} \rfloor}} 1- \Bigg[ \sum_{n=0}^{\tilde{k}-1}\frac{(\gamma\theta_{\tilde{\mathsf I}})^{-n}}{n!}(-1)^n \sum_{q=1}^{n} \exp({\log\mathcal{L}_{\{\mathsf D\}}(s)}) \notag \\
&\hspace {-9 pc}\cdot B_{n,q}\left( \frac{d\log\mathcal{L}_{\{\mathsf D\}}(s)}{ds},\cdots, \frac{d^{n-q+1}\log\mathcal{L}_{\{\mathsf D\}}(s)}{ds^{n-q+1}} \right) \Bigg|_{s=\frac{1}{\gamma \theta_{\tilde{\mathsf I}}}}\Bigg].
\end{align}

As can be seen in \eqref{cov_bell}, differentiating the exponent of the Laplace transform is now required to obtain the bounds of the coverage probability. For the special case under the Rayleigh fading, i.e., $m=1$, we simplify the $n$th derivative of the exponent of $\mathcal{L}^{m=1}_{\{\mathsf D\}} (s)$ in the following corollary.
\begin{corollary}
The $n$th derivative of the exponent of $\mathcal{L}^{m=1}_{\{\mathsf D\}} (s)$ is given by 
\begingroup
\allowdisplaybreaks
\begin{align} \label{diff_exp_m1}
\frac{d^n\log\mathcal{L}^{m=1}_{\{\mathsf D\}}(s)}{ds^n} \notag \\
&\hspace{-5.6 pc}=\lambda_{\mathrm S}|\mathcal{A}_{\mathrm{clu}}|\frac{n! (-G_{\mathrm i}^{\mathrm t})^n}{R_\mathrm E R_\mathrm S (1-\cos\phi_{\mathrm{clu}})\alpha} \notag \\
&\hspace{-5.2 pc}\cdot \Bigg[\frac{R_{\mathrm{clu}}^{2+\alpha}}{(sG_\mathrm{i}^{\mathrm t})^{n+1}(1+\frac{2}{\alpha})} {}_2F_1\left(n+1, 1+\frac{2}{\alpha};2+\frac{2}{\alpha};-\frac{R_{\mathrm{clu}}^\alpha}{sG_\mathrm{i}^{\mathrm t}}\right) \notag \\
&\hspace{-5.3 pc} -\frac{R_{\mathrm{min}}^{2+\alpha}}{(sG_\mathrm{i}^{\mathrm t})^{n+1}(1+\frac{2}{\alpha})}
{}_2F_1\left(n+1, 1+\frac{2}{\alpha};2+\frac{2}{\alpha};-\frac{R_{\mathrm{min}}^\alpha}{sG_\mathrm{i}^{\mathrm t}}\right)\Bigg].
\end{align}
\endgroup
\begin{proof}
The $n$th derivative of the exponent of $\mathcal{L}^{m=1}_{\{\mathsf D\}} (s)$ is obtained as
\begingroup
\allowdisplaybreaks
\begin{align} \label{diff_exp}
\frac{d^n\log\mathcal{L}^{m=1}_{\{\mathsf D\}}(s)}{ds^n} \notag \\
&\hspace{-5.5 pc}=\lambda_{\mathrm S}|\mathcal{A}_{\mathrm{clu}}| \frac{1}{R_\mathrm{E}R_\mathrm{S}(1-\cos\phi_{\mathrm{clu}})\alpha} \cdot \frac{d^n}{ds^n} \int_{R_{\mathrm{clu}}^{-\alpha}}^{R_{\mathrm{min}}^{-\alpha}}\frac{t^{-\frac{2}{\alpha}-1}}{1+sG_\mathrm{i}^{\mathrm t} t}dt \notag \\
&\hspace{-5.5 pc}\stackrel{(a)}{=}\lambda_{\mathrm S}|\mathcal{A}_{\mathrm{clu}}| \frac{1}{R_\mathrm{E}R_\mathrm{S}(1-\cos\phi_{\mathrm{clu}})\alpha} \cdot  \int_{R_{\mathrm{clu}}^{-\alpha}}^{R_{\mathrm{min}}^{-\alpha}} \frac{d^n}{ds^n} \frac{t^{-\frac{2}{\alpha}-1}}{1+sG_\mathrm{i}^{\mathrm t} t}dt \notag \\
&\hspace{-5.5 pc}=\lambda_{\mathrm S}|\mathcal{A}_{\mathrm{clu}}|\frac{n! (-G_{\mathrm i}^{\mathrm t})^n}{R_\mathrm E R_\mathrm S (1-\cos\phi_{\mathrm{clu}})\alpha} \cdot \int_{R_{\mathrm {clu}}^{-\alpha}}^{R_{\mathrm {min}}^{-\alpha}} \frac{t^{n-\frac{2}{\alpha}-1}}{(1+sG_{\mathrm i}^{\mathrm t}t)^{n+1}}dt,
\end{align}
\endgroup
where (a) follows from the Leibniz integration rule. The final expression in \eqref{diff_exp_m1} is obtained by {[48, Eq. 3.194.2.6]}, which completes the proof.   
\end{proof}
\end{corollary}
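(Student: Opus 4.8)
The plan is to collapse the exponent down to a single scalar integral and then hit that integral with a table of integrals. First I would make the exponent explicit: combining the Poisson marginalization \eqref{lap_cluster} with the $m=1$ conditional transform \eqref{lap_m=1}, and using the fact (noted just after \eqref{lap_cluster}) that $\mathcal{L}^{1/L}_{\{\mathsf D|\mathsf{N}_{\mathrm{S,clu}}=L\}}(s)$ does not depend on $L$, one obtains
\begin{align}
\log\mathcal{L}^{m=1}_{\{\mathsf D\}}(s)=-\lambda_{\mathrm S}|\mathcal{A}_{\mathrm{clu}}|+\frac{\lambda_{\mathrm S}|\mathcal{A}_{\mathrm{clu}}|}{R_{\mathrm E}R_{\mathrm S}(1-\cos\phi_{\mathrm{clu}})\alpha}\int_{R_{\mathrm{clu}}^{-\alpha}}^{R_{\mathrm{min}}^{-\alpha}}\frac{t^{-2/\alpha-1}}{1+sG_{\mathrm i}^{\mathrm t}t}\,dt .
\end{align}
Since the additive constant is annihilated by a single differentiation, the formula is to be read for $n\ge 1$, where only the integral term survives.

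Next I would differentiate $n$ times under the integral sign. This interchange is harmless: on any compact $s$-interval inside $(0,\infty)$ and on the compact $t$-range $[R_{\mathrm{clu}}^{-\alpha},R_{\mathrm{min}}^{-\alpha}]$ the denominator $1+sG_{\mathrm i}^{\mathrm t}t$ stays bounded away from zero, so $(s,t)\mapsto(1+sG_{\mathrm i}^{\mathrm t}t)^{-1}$ together with all its $s$-derivatives is continuous and bounded there, and the Leibniz rule applies. Using $\frac{d^n}{ds^n}(1+sG_{\mathrm i}^{\mathrm t}t)^{-1}=n!\,(-G_{\mathrm i}^{\mathrm t})^n\,t^n\,(1+sG_{\mathrm i}^{\mathrm t}t)^{-(n+1)}$, this leaves the prefactor $\frac{\lambda_{\mathrm S}|\mathcal{A}_{\mathrm{clu}}|\,n!\,(-G_{\mathrm i}^{\mathrm t})^n}{R_{\mathrm E}R_{\mathrm S}(1-\cos\phi_{\mathrm{clu}})\alpha}$ multiplying the finite integral $\int_{R_{\mathrm{clu}}^{-\alpha}}^{R_{\mathrm{min}}^{-\alpha}}t^{\,n-2/\alpha-1}(1+sG_{\mathrm i}^{\mathrm t}t)^{-(n+1)}\,dt$.

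The last step is to put this integral in closed form. I would split $\int_{R_{\mathrm{clu}}^{-\alpha}}^{R_{\mathrm{min}}^{-\alpha}}=\int_{R_{\mathrm{clu}}^{-\alpha}}^{\infty}-\int_{R_{\mathrm{min}}^{-\alpha}}^{\infty}$ (each tail converges since $(n+1)-(n-2/\alpha)=1+2/\alpha>0$) and apply the tabulated identity $\int_u^\infty x^{\mu-1}(1+\beta x)^{-\nu}\,dx=\frac{u^{\mu-\nu}}{\beta^\nu(\nu-\mu)}\,{}_2F_1\!\big(\nu,\nu-\mu;\nu-\mu+1;-\tfrac{1}{\beta u}\big)$ with $\mu=n-2/\alpha$, $\nu=n+1$, $\beta=sG_{\mathrm i}^{\mathrm t}$, applied at $u=R_{\mathrm{clu}}^{-\alpha}$ and at $u=R_{\mathrm{min}}^{-\alpha}$. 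Then $\nu-\mu=1+2/\alpha$, $\nu-\mu+1=2+2/\alpha$, and for $u=R^{-\alpha}$ with $R\in\{R_{\mathrm{clu}},R_{\mathrm{min}}\}$ one gets $u^{\mu-\nu}=R^{\alpha+2}$ and $-1/(\beta u)=-R^{\alpha}/(sG_{\mathrm i}^{\mathrm t})$; taking the difference of the two evaluations reproduces the bracket in the statement. The only genuinely delicate part is this bookkeeping: one has to select the $\int_u^\infty$ branch of the table entry (the $\int_0^u$ branch yields a different, though Pfaff-equivalent, hypergeometric form without the $R^{\alpha}/(sG_{\mathrm i}^{\mathrm t})$ argument), and to track the sign, since $R_{\mathrm{min}}<R_{\mathrm{clu}}$ forces $R_{\mathrm{min}}^{-\alpha}>R_{\mathrm{clu}}^{-\alpha}$, so the $R_{\mathrm{clu}}$-term enters with a plus and the $R_{\mathrm{min}}$-term with a minus. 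Everything else — the differentiation identity and the justified interchange — is routine.
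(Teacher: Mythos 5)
Your proposal is correct and follows essentially the same route as the paper's proof: express the exponent as the single integral $\int_{R_{\mathrm{clu}}^{-\alpha}}^{R_{\mathrm{min}}^{-\alpha}}t^{-2/\alpha-1}(1+sG_{\mathrm i}^{\mathrm t}t)^{-1}dt$ (up to an additive constant killed by differentiation), differentiate under the integral sign via Leibniz, and evaluate the resulting integral with the tabulated identity Eq.~3.194.2. Your additional care --- justifying the interchange, noting the formula holds for $n\ge 1$, and explicitly tracking the parameters and signs in the table entry --- only makes explicit what the paper leaves to the citation of [48].
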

\noindent By invoking \eqref{diff_exp_m1} in \eqref{cov_bell}, {the lower and upper bounds} for the coverage probability can be derived for the case when $m=1$.

\section{Simulation Results {and Analyses}}\label{sec4}
In this section, we validate the Gamma approximation and the derived coverage probability bounds through Monte Carlo simulations. We also evaluate the impact of the system parameters on the coverage probability.

In our numerical experiments, we set the satellite altitude to $R_\mathrm{S}-R_\mathrm{E}=500$ km and restrict satellite visibility to $\theta_{\mathrm{min}}=25^\circ$, with a defined polar angle of the cluster area $\phi_{\mathrm{clu}} = 1.6^\circ$. {Based on the relationship between the radius of the cluster area $R_{\mathrm{clu}}$ (calculated using $\phi_{\mathrm{clu}}$) and the off-axis angle $\theta_{\mathrm{off-axis}}$ (which determines whether the antenna gain falls into the main-lobe or side-lobe, assuming satellites maintain the boresight of their beams toward the sub-satellite point), expressed as $R_{\mathrm{clu}}=R_{\mathrm{S}}\cos\theta_{\mathrm{off-axis}}-\sqrt{R_{\mathrm{S}}^2(-1+{\cos^2\theta_{\mathrm{off-axis}}})+R_{\mathrm{E}}^2}$, the radius of the cluster area is constrained by the off-axis angle. This constraint ensures that satellites within the cluster area {will} serve the typical user with main-lobe gains, while those outside the cluster area {will} provide side-lobe gains.} Utilizing a sectored antenna gain, we employ the normalized antenna gain $G_{\mathrm o}^{\mathrm{ t}}/G_{\mathrm i}^{\mathrm{t}} = 0.1$, indicating a 10 dB gain advantage for satellites in the cluster area compared to those outside. For the coverage probability calculations, we explore two scenarios: one with an average of $\lambda_{\mathrm S}|\mathcal{A}|= 50$ satellites in the observable spherical dome and the other with $\lambda_{\mathrm S}|\mathcal{A}|= 300$. Given these parameters, the average number of satellites on the satellite sphere, whose surface area is calculated as $4\pi R_{\mathrm  S}^2$, is at least 10k, and the average number of satellites within the cluster area exceeds two. Specific satellite numbers for these scenarios are detailed in Table \ref{table_1}.

For the analyses of two types of the derived coverage probability bounds, the bounds from the distribution of $\tilde{\mathsf{I}}$, the approximated Gamma random variable for the accumulated interference power $\mathsf I$, are assessed in scenario~1, while the bounds from the distribution of $\tilde{\mathsf{D}}$, the approximated Gamma random variable for the accumulated cluster power $\mathsf D$ is evaluated in scenario~2. The forthcoming results will offer a more detailed explanation and demonstrate how the utilization of these random variables achieves either elaborateness or computational efficiency. {Lastly, as demonstrated in \cite{Kim:2024}, the validity of our PPP-based modeling for coverage probability is substantiated by the reasonable alignment between the performance of satellite networks based on actual Starlink satellite distribution data and the PPP-based model for densely deployed constellations.}


{Before analyzing the simulation results, we compare the coverage probabilities under the Nakagami-$m$ fading, which is adopted in this study, and {the shadowed-Rician fading in Fig.~\ref{comp_naksha}}. The simulation results indicate {marginal} numerical differences between these two small-scale fading models, further substantiated by the mathematical fact that the shadowed-Rician fading can be effectively approximated by the Nakagami-$m$ fading. This approximation holds because the squared magnitude of the shadowed-Rician fading can be modeled as a Gamma random variable \cite{Abd:2003},\cite{Tal:2024}, while the squared magnitude of the Nakagami-$m$ fading inherently follows the Gamma distribution. Therefore, the proposed coverage probability analysis remains applicable to settings where the shadowed-Rician fading is incorporated into the analysis.}


\begin{table} [t!]
\captionsetup{justification=centering, labelsep=newline, font={smaller,sc}}
\caption{Average Number of Satellites}
\centering
\begin{tabular}{|c|c|c|c|} 
\hline
Type & On $|\mathcal{A}|$ & On $|\mathcal{A}_{\mathrm{clu}}|$ & On satellite sphere\\
\hline
Scenario 1 & 50 & 2.0837 & 10,700\\
\hline
Scenario 2 & 300 & 12.5020 & 64,100\\
\hline
\end{tabular} \label{table_1}
\end{table}

\begin{figure} [t!]
	\centering
	\includegraphics[width=1\columnwidth]{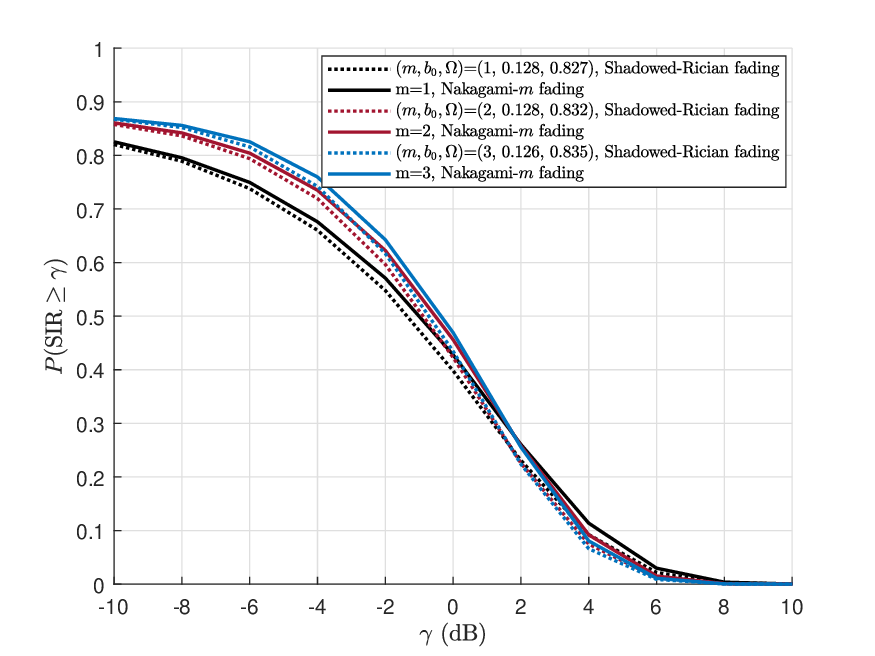}
	\caption{{Coverage probabilities under the shadowed-Rician fading with $(m,b_0,\Omega)=(1, 0.128, 0.827),$ $(2, 0.128, 0.832),$ $(3, 0.126, 0.835),$ corresponding to average shadowing \cite{Abd:2003}, and the Nakagami-$m$ fading with $m=1,2,3$, respectively.}}\label{comp_naksha}
\end{figure}



\subsection{Validity of Gamma Approximation}
Figs.~\ref{fig_gamma_int} and \ref{fig_gamma_clu} show a comparison between the actual CDFs obtained from Monte Carlo simulations and our analytical CDFs. As can be seen in these figures, the Gamma random variables, {$\tilde{\mathsf{I}}$ and $\tilde{\mathsf{D}}$}, closely match the statistics of both \black{the accumulated interference power $\mathsf I$ and the accumulated cluster power $\mathsf D$}, which are defined on the state space of Poisson processes. Although the difference in CDFs is marginal, {the CDFs of $\tilde{\mathsf{D}}$ are slightly more deviated from the CDFs of $\mathsf D$, as shown in Fig.~\ref{fig_gamma_clu}, than the CDFs of $\tilde{\mathsf{I}}$ from the CDFs of $\mathsf I$, as shown in Fig.~\ref{fig_gamma_int}. This is due to the small cluster area, Gamma approximation over $\mathsf{D}$ involves fewer terms than $\mathsf{I}$ given that there are fewer satellites in the cluster area compared to those outside.}


\begin{figure} [t!]
	\centering
	\includegraphics[width=1\columnwidth]{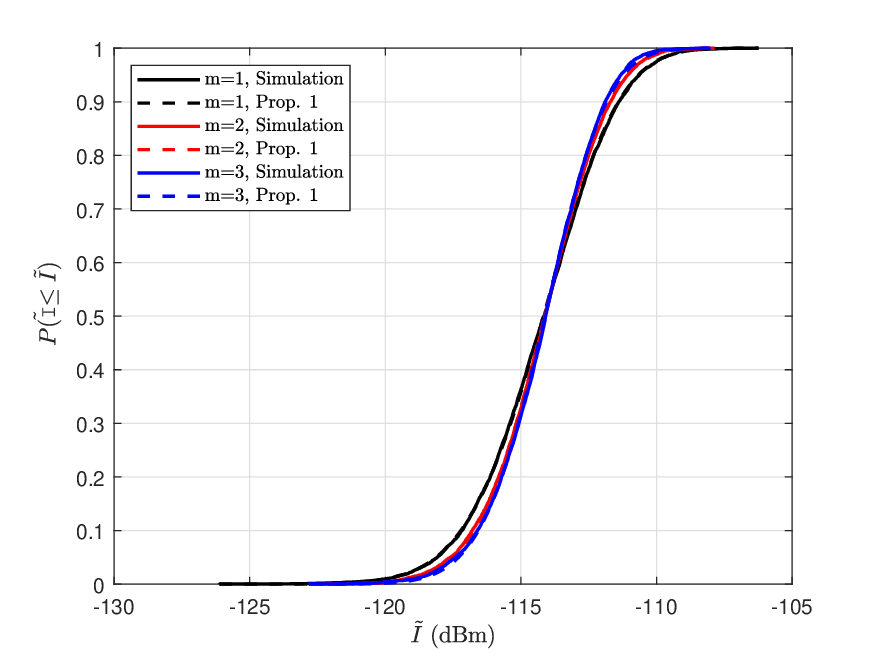}
	\caption{Actual CDFs (solid) and CDFs of the Gamma random variable $\tilde{\mathsf I}$ from Proposition \ref{shape_and_scale} (dashed) ($\lambda_{\mathrm S}|\mathcal{A}| = 50$, scenario~1).}\label{fig_gamma_int}
\end{figure}
\begin{figure}
	\centering
	\includegraphics[width=1\columnwidth]{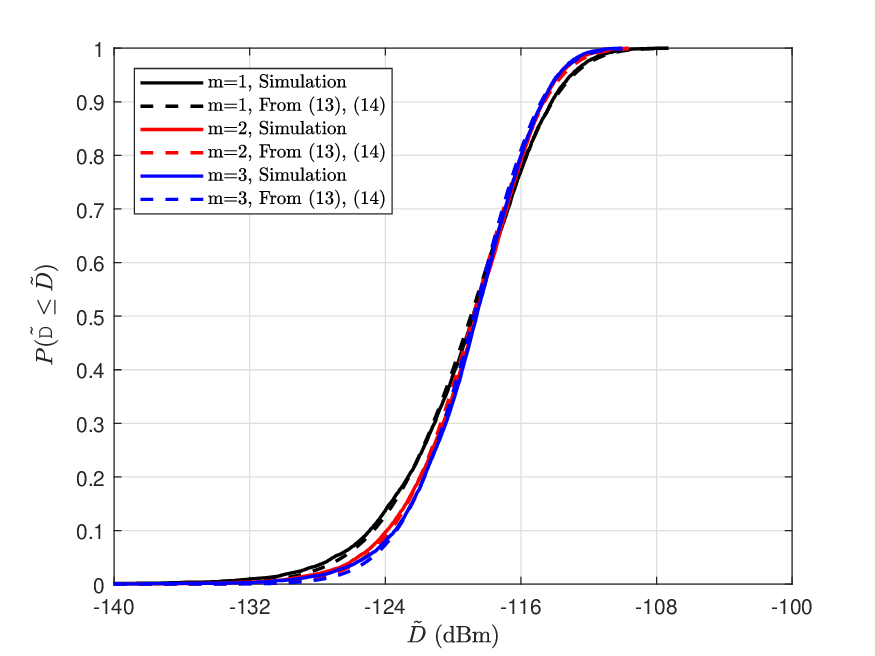}
	\caption{Actual CDFs (solid) and CDFs of the Gamma random variable $\tilde{\mathsf D}$ with parameters in \eqref{shape_D} and \eqref{scale_D} (dashed) ($\lambda_{\mathrm S}|\mathcal{A}| = 300$, scenario~2).}\label{fig_gamma_clu}
\end{figure}

\begin{figure} [t!]
	\centering
	\includegraphics[width=1\columnwidth]{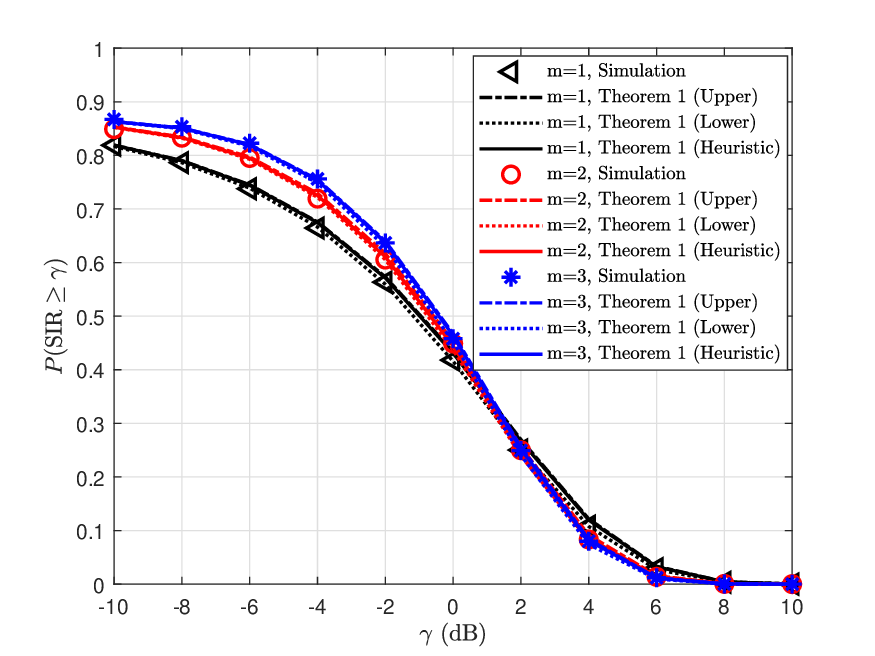}
	\caption{Coverage probabilities from simulation and Theorem~\ref{cov} with heuristic approach \eqref{heu} ($\lambda_{\mathrm S}|\mathcal{A}| = 50$, scenario~1).}\label{fig_cov_int}
\end{figure}

\begin{figure}
	\centering
	\includegraphics[width=1\columnwidth]{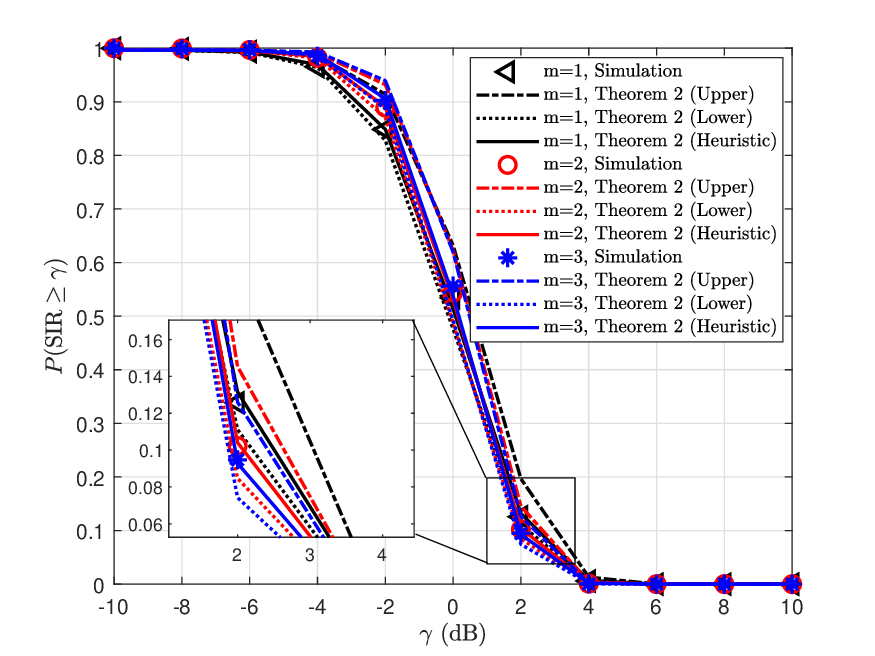}
	\caption{Coverage probabilities from simulation and Theorem~\ref{cov_clu_theorem} with heuristic approach \eqref{heu} ($\lambda_{\mathrm S}|\mathcal{A}| = 300$, scenario~2).}\label{fig_cov_clu}
\end{figure}
\subsection{Derived Coverage Probability} \label{sim_sector}
In Figs.~\ref{fig_cov_int} and \ref{fig_cov_clu}, we present the lower and upper bounds of the coverage probability based on $\tilde{\mathsf{I}}$ and $\tilde{\mathsf{D}}$, in scenarios~1 and~2, respectively. It can be observed that both bounds effectively encompass the actual coverage probability, as obtained from Monte Carlo simulations, in both scenarios. Additionally, we suggest a heuristic approach, denoted as $P^{\mathrm{cov,heu}}$, to effectively emulate the coverage probability using the derived bounds. \black{This heuristic approach bridges the gap between the lower and upper bounds by linearly integrating them with weights that reflect the relative distance of the shape parameter~$k$ from its nearest integers, $\lfloor k \rfloor$ and $\lceil k \rceil$. To consider the impact of rounding on $k$ for the coverage probability bounds, we use the terms $(\lceil k \rceil -k)$ and $(k-\lfloor k \rfloor)$, each normalized by $\lceil k \rceil - \lfloor k \rfloor=1$, as multipliers for the respective bounds. Specifically, in the context of Theorem \ref{cov}, since the upper bound is associated with $\lfloor k \rfloor$ and the lower bound with $\lceil k \rceil$, these multipliers indicate {that the closer the rounded integer is to~$k$, the greater the weight assigned by $P^{\mathrm{cov,heu}}$ to the corresponding bound.} The expression for the proposed heuristic approach is formulated as}
\begin{figure} [t!]
	\centering
	\includegraphics[width=1\columnwidth]{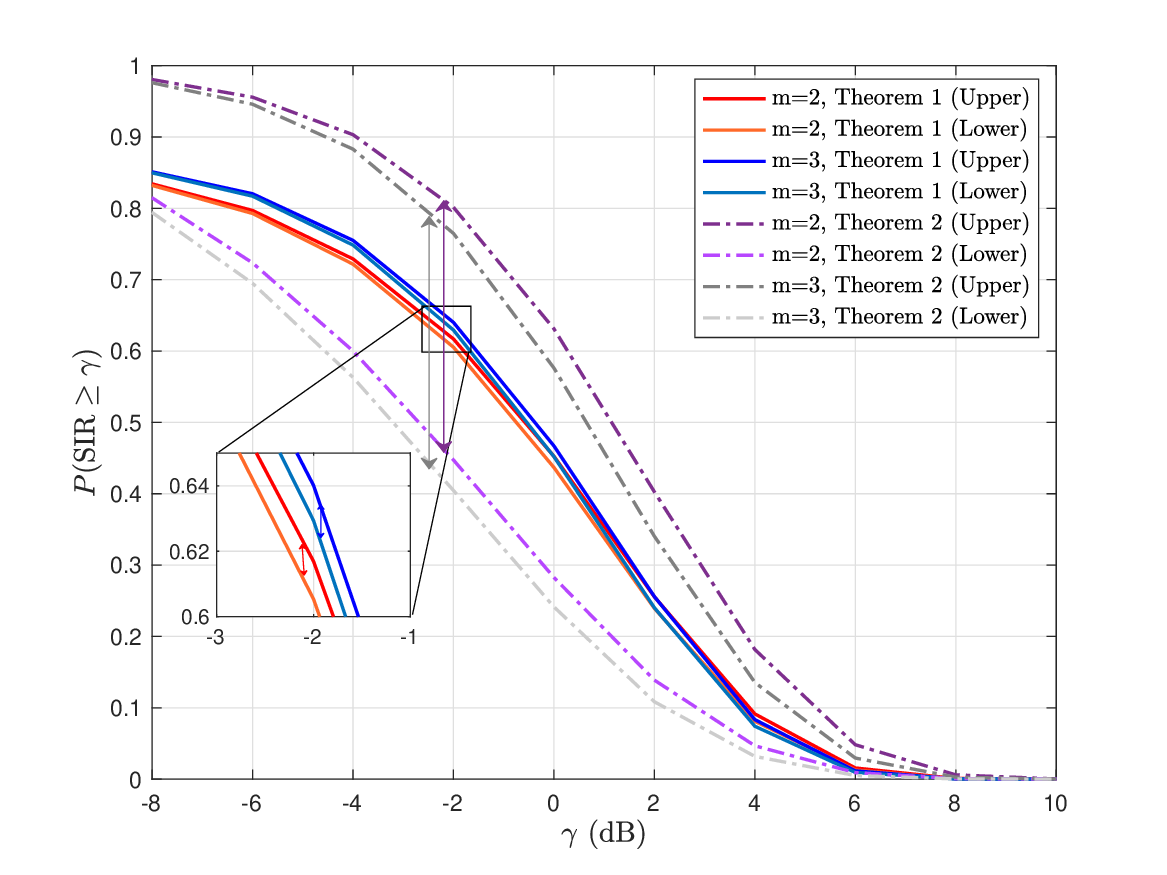}
	\caption{Comparing the tightness of the coverage probability bounds from Theorems \ref{cov} and \ref{cov_clu_theorem} ($\lambda_{\mathrm S}|\mathcal{A}| = 50$, scenario 1). \black{The shape parameters for $\tilde{\mathsf I}$ and $\tilde{\mathsf D}$ are set to $26.4586$ and $1.56$ when $m=2$, and $29.7660$ and $1.04$ when $m=3$, respectively.}}\label{fig_cov_comparsion}
\end{figure}

\begin{align} \label{heu}
P^{\mathrm{cov,heu}} \notag \\
&\hspace{-3.25 pc}={(\lceil k \rceil -k )}\underbrace{\Bigg(1-\sum_{n=0}^{\lfloor{k}\rfloor-1}\frac{(\gamma\theta_{\tilde{\mathsf I}})^{-n}}{n!}(-1)^n \frac{d^n \mathcal{L}_{\{\mathsf D \}}(s)}{ds^n}\Bigg|_{s=\frac{1}{\gamma \theta_{\tilde{\mathsf I}}}}\Bigg)}_{\black{\text{Upper bound of $P^{\mathrm{cov}}$}}}\notag \\
&\hspace{-3.25 pc}+{(k-\lfloor k \rfloor)} \underbrace{\Bigg(1-\sum_{n=0}^{\lceil{k}\rceil-1}\frac{(\gamma\theta_{\tilde{\mathsf I}})^{-n}}{n!}(-1)^n  \frac{d^n \mathcal{L}_{\{\mathsf D \}}(s)}{ds^n}\Bigg|_{s=\frac{1}{\gamma \theta_{\tilde{\mathsf I}}}}\Bigg)}_{\black{\text{Lower bound of $P^{\mathrm{cov}}$}}}.
\end{align}
We can easily obtain the heuristic expression for the bounds obtained in Theorem \ref{cov_clu_theorem} using the same principle. \black{Given that the shape parameter $k$ is influenced by factors such as the number of satellites and the cluster area, effectively reflecting the impact of rounding operations on $k$ serves as a crucial bridge between theoretical upper and lower bounds and Monte Carlo-based system performance, as demonstrated in Figs.~\ref{fig_cov_int} and \ref{fig_cov_clu}.}

\black{In Fig.~\ref{fig_cov_comparsion}, we compare the tightness of the lower and upper bounds of the coverage probability. The bounds derived from~$\tilde{\mathsf{I}}$ show greater tightness than those from $\tilde{\mathsf{D}}$. This difference in tightness results from the larger shape parameter of $\tilde{\mathsf{I}}$ than that of $\tilde{\mathsf{D}}$. As detailed in \eqref{cov_bound} and \eqref{cov_bound_2}, the shape parameter, {$k_{\tilde{\mathsf{I}}}$ or $k_{\tilde{\mathsf{D}}}$}, influences the number of operations required, with the upper bound incorporating one extra operation compared to the lower bound. A higher number of operations typically results in reduced incremental differences between successive calculations, leading to a narrower gap between the lower and upper bounds. Therefore, the coverage probability bounds from~$\tilde{\mathsf{I}}$, with its high shape parameter, are more refined than those from~$\tilde{\mathsf{D}}$, although both adequately encompass Monte Carlo simulations.} However, there is a trade-off between the tightness and computational complexity, as a large shape parameter introduces an increased computational burden. In scenarios involving a massive number of satellites, the coverage probability bounds derived from $\tilde{\mathsf{D}}$ can provide a fine bound while demanding significantly fewer computational resources. For instance, considering an average of $300$ satellites in the observable spherical dome $\mathcal{A}$ with $m=2$, the shape parameters for $\tilde{\mathsf{I}}$ and $\tilde{\mathsf{D}}$ are $k_{\tilde{\mathsf{I}}}=158.7518$ and $k_{\tilde{\mathsf{D}}}=8.3198$, respectively. In this case, the coverage probability bounds utilizing $\tilde{\mathsf{D}}$ can be computed within a manageable time frame, thanks to the smaller shape parameter. {It is worth noting that the tightness of the bounds derived from $\tilde{\mathsf{D}}$ improves as {the system includes more satellites or the cluster area expands}.} \black{By proposing two distinct methods based on Theorems~1 and~2, we facilitate a tailored selection for analyzing coverage probability specific to the system model under consideration. This flexibility enables researchers to adapt their strategies according to the system's demands, optimizing either for tightness or computational efficiency.}

\subsection{Impacts of {System Parameters}}
We {first} investigate the impact of the cluster area, \black{determined by the polar angle $\phi_{\mathrm{clu}}$, which represents the maximum zenith angle of the multiple satellites in the cluster area $\mathcal{A}_{\mathrm{clu}}$,} on the coverage probability. As $\phi_{\mathrm{clu}}$ increases, the cluster area expands, incorporating satellites that previously caused interference into the cluster area for cooperative service to the user. Hence, the coverage probability is enhanced due to the increased accumulated cluster power and decreased accumulated interference power. {Fig.~\ref{fig_ang_clu}} shows the coverage probabilities for various SIR thresholds and $\phi_{\mathrm{clu}}$. As $\phi_{\mathrm{clu}}$ increases, the slope initially rises and then decreases beyond a specific $\phi_{\mathrm{clu}}$ for each SIR threshold. For example, with a SIR threshold $\gamma = -5$ dB, the coverage probability remarkably increases as $\phi_{\mathrm{clu}}$ rises from $1^\circ$ to $2^\circ$ but shows a slight increase from $2^\circ$ to $3^\circ$. Additionally, for a given $\phi_{\mathrm{clu}}$, we can determine the effective SIR threshold for a specific demanded coverage probability. For instance, if the cluster area is physically constrained to $\phi_{\mathrm{clu}}= 3^\circ$, setting the threshold to $0$ dB rather than $-10$ dB is more efficient to enhance the network performance.

\begin{figure} [t!]
	\centering
	\includegraphics[width=1\columnwidth]{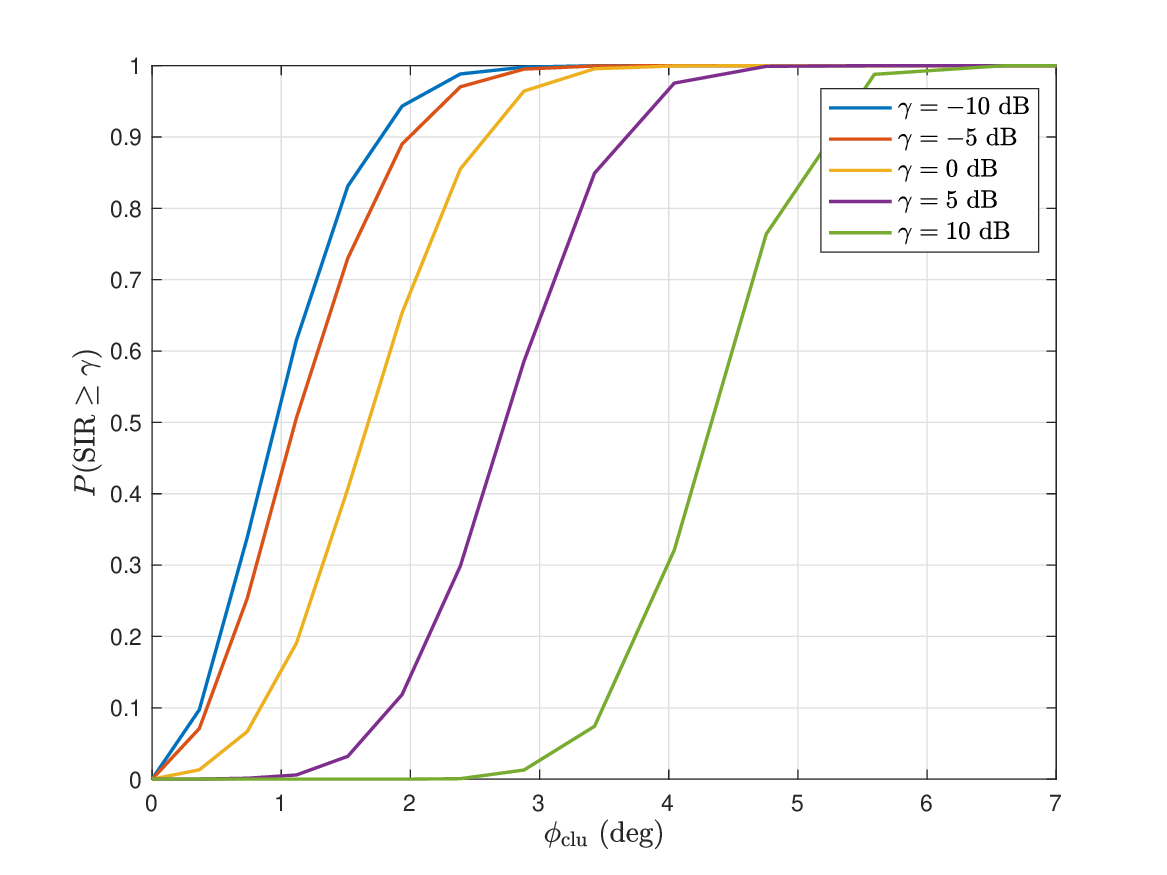}
	\caption{Coverage probabilities corresponding to $\gamma$ with respect to $\phi_{\mathrm {clu}}$ when $m=2$ ($\lambda_{\mathrm S}|\mathcal{A}| = 50$, scenario 1).}\label{fig_ang_clu}
\end{figure}

\begin{figure}
	\centering
	\includegraphics[width=1\columnwidth]{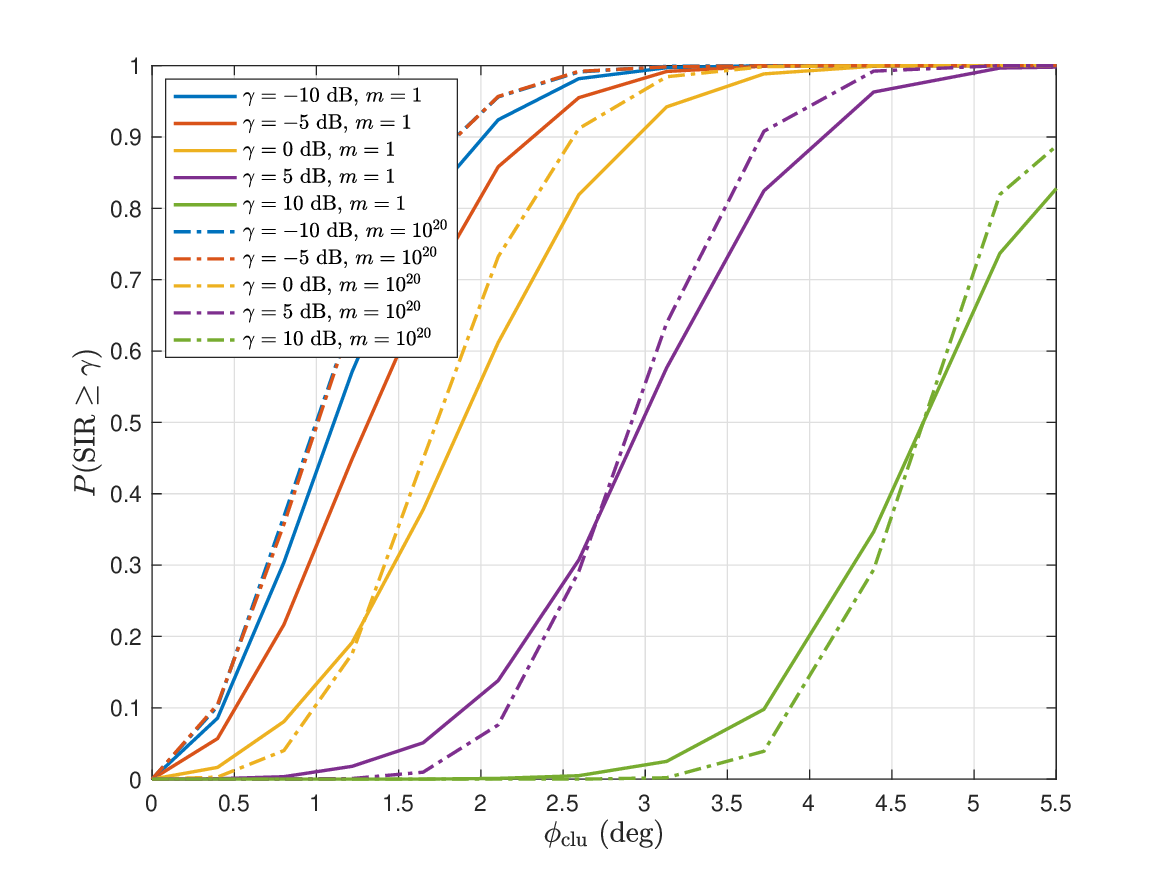}
	\caption{\black{Coverage probabilities corresponding to $m=1,10^{20}$ and $\gamma$ with respect to $\phi_{\mathrm {clu}}$ ($\lambda_{\mathrm S}|\mathcal{A}| = 50$, scenario 1).}}\label{fig_nakagami}
\end{figure}

In Fig.~\ref{fig_nakagami}, we delve into the relationship between coverage probability and the Nakagami parameter $m$, which is linked to the Rician $K$ factor by the formula $m=(K+1)^2/(2K+1)$, quantifying the intensity of the LOS effect. As $m$ increases, the LOS effect is intensified due to $K=(m-1)+\sqrt{m^2-m}$ and $\frac{dK}{dm}=1+\frac{2m-1}{2\sqrt{m^2-m}}>0$. This intensified LOS effect increases both the accumulated cluster power and interference power, which can be explained by considering the derivatives of shape and scale parameters of their approximated Gamma random variables, $\tilde{\mathsf{D}}$ and $\tilde{\mathsf{I}}$, with respect to $m$.
The derivatives of the shape and scale parameters of~$\tilde{\mathsf{D}}$ and~$\tilde{\mathsf{I}}$ are detailed below
\begingroup
\allowdisplaybreaks
\begin{align} \label{m_shape_D}
\frac{dk_{\tilde{\mathsf D}}}{dm}=\frac{4(\alpha-1)\pi \lambda_{\mathrm S} \frac{R_{\mathrm {S}}}{R_{\mathrm {E}}}}{(\alpha-2)^2 }\frac{(R_{\mathrm{min}}^{-\alpha+2}-R_{\mathrm{clu}}^{-\alpha+2})^2}{R_{\mathrm{min}}^{-2\alpha+2}-R_{\mathrm{clu}}^{-2\alpha+2}}\frac{1}{m^2\left(1+\frac{1}{m}\right)^2}, 
\end{align}
\begin{align} \label{m_scale_D}
\frac{d\theta_{\tilde{\mathsf D}}}{dm}=\frac{(\alpha-2) G^{\mathrm t}_{\mathrm i} }{2(\alpha-1)}\frac{R_{\mathrm{min}}^{-2\alpha+2}-R_{\mathrm{clu}}^{-2\alpha+2}}{R_{\mathrm{min}}^{-\alpha+2}-R_{\mathrm{clu}}^{-\alpha+2}}\left(-\frac{1}{m^2}\right),
\end{align}
\begin{align} \label{m_shape_I}
\frac{dk_{\tilde{\mathsf I}}}{dm}=\frac{4(\alpha-1)\pi \lambda_{\mathrm S} \frac{R_{\mathrm {S}}}{R_{\mathrm {E}}}}{(\alpha-2)^2}\frac{(R_{\mathrm{clu}}^{-\alpha+2}-R_{\mathrm{max}}^{-\alpha+2})^2}{R_{\mathrm{clu}}^{-2\alpha+2}-R_{\mathrm{max}}^{-2\alpha+2}}\frac{1}{m^2\left(1+\frac{1}{m}\right)^2},
\end{align}
\begin{align} \label{m_scale_I}
\frac{d\theta_{\tilde{\mathsf I}}}{dm}=\frac{(\alpha-2)G^{\mathrm t}_{\mathrm o} }{2(\alpha-1)} \frac{R_{\mathrm{clu}}^{-2\alpha+2}-R_{\mathrm{max}}^{-2\alpha+2}}{R_{\mathrm{clu}}^{-\alpha+2}-R_{\mathrm{max}}^{-\alpha+2}}\left(-\frac{1}{m^2}\right).
\end{align}
\endgroup
It can be easily verified that the signs of derivatives \eqref{m_shape_D} and \eqref{m_shape_I} are positive, indicating that increasing $m$ raises the shape parameters of $\tilde{\mathsf{D}}$ and $\tilde{\mathsf{I}}$, and the signs of derivatives \eqref{m_scale_D} and \eqref{m_scale_I} are negative, indicating that an increase in $m$ reduces the scale parameters of $\tilde{\mathsf{D}}$ and~$\tilde{\mathsf{I}}$. Since a higher shape parameter increases the average value of the Gamma random variable, and a smaller scale parameter reduces variability around the mean, the results in \eqref{m_shape_D}-\eqref{m_scale_I} confirm that as $m$ increases -- intensifying the LOS effect -- both $\tilde{\mathsf{D}}$ and $\tilde{\mathsf{I}}$ increase. {Moreover, as detailed in \eqref{m_shape_D} and \eqref{m_shape_I}, the \black{increases} in $\tilde{\mathsf{D}}$ and $\tilde{\mathsf{I}}$ approach zero as $R_{\mathrm{clu}}$ is close to $R_{\mathrm{min}}$ and $R_{\mathrm{max}}$, respectively.} Therefore, for large $R_{\mathrm{clu}}$ (high $\phi_{\mathrm{clu}}$), coverage probability increases with rising~$m$ due to a relatively higher increase in accumulated cluster power compared to the increase in accumulated interference power. Conversely, coverage probability decreases with an increase in $m$ in scenarios with small $R_{\mathrm{clu}}$ (low $\phi_{\mathrm{clu}}$), as the predominant increase in accumulated interference power exceeds the rise in accumulated cluster power. In this context, at high SIR thresholds, a sufficiently high $\phi_{\mathrm{clu}}$ is necessary because an increase in accumulated cluster power is essential for improving coverage while accumulated interference power concurrently rises with increasing $m$. In other words, as SIR thresholds decrease, the crossing from a coverage decrease to increase with rising $m$ occurs at lower~$\phi_{\mathrm{clu}}$ levels, as observed in the leftward shift of the coverage crossing point with decreasing SIR thresholds. \black{This analysis concludes that increasing $m$ does not always improve coverage in satellite networks using the cluster area-based approach. It underscores the necessity of incorporating other system parameters to accurately evaluate the impact of satellite clustering and strategically design satellite networks.}

\section{Conclusions}\label{sec5}
In this paper, we proposed mathematical analyses of the performance of satellite cluster networks, where satellites in the cluster area collaborate to serve users, particularly in mega-constellations. We demonstrated that satellite clustering significantly enhances coverage probability compared to solely utilizing the nearest satellite. To establish the lower and upper bounds of the coverage probability mathematically, we first derived the key parameters of the approximated Gamma random variables to address the compound terms in the SIR based on the system model. Leveraging the distribution of two distinct approximated Gamma random variables, we suggested the lower and upper bounds of the coverage probability in two different approaches and compared the advantages of each in terms of tightness and complexity. These bounds that depend on system parameters effectively showed the network's performance with sufficient tightness to simulation results. Moreover, regarding satellite cluster networks, our mathematical analyses of how system parameters such as cluster area, LOS intensity, and SIR thresholds impact coverage probability can provide efficient strategies for designing dense satellite networks that necessitate effective interference management.


\begin{appendices}
\section{Proof of Proposition~\ref{shape_and_scale}}\label{appendix_shape_and_scale}
For a Gamma random variable $\mathsf X \sim \Gamma(k,\theta)$, its shape parameter $k$ and scale parameter $\theta$ are determined by the first and second-order moments $k\theta=\mathbb{E}[\mathsf X]$ and $k\theta^2=\text{Var}[\mathsf X]$.
By the Campbell's theorem \cite{Kingman:1992},
the mean and the variance of the approximated Gamma random variable $\tilde{\mathsf I}$ for the accumulated interference power ${\mathsf I}$ are derived as follows

\begin{align}\label{shape}
\mathbb{E}(\tilde{\mathsf I})&\stackrel{(a)}{=}\black{\int_{ \text{Area of} \bar{\mathcal{A}}_{\mathrm{clu}} } G^{\mathrm t}_{\mathrm o} \mathbb{E}[{\mathsf H}] r^{-\alpha} \lambda_{\mathrm S} dr } \notag\\
&\stackrel{(b)}{=}\black{2 \frac{R_{\mathrm {S}}}{R_{\mathrm {E}}}\pi \lambda_{\mathrm S} \int_{R_{\mathrm{clu}}}^{R_{\mathrm{max}}} G^{\mathrm t}_{\mathrm o} \mathbb{E}[{\mathsf H}] r^{-\alpha}r dr} \notag\\
&=\frac{2 G^{\mathrm t}_{\mathrm o}}{\alpha-2}\frac{R_{\mathrm {S}}}{R_{\mathrm {E}}} \pi  \lambda_{\mathrm S} (R_{\mathrm{clu}}^{-\alpha+2}-R_{\mathrm{max}}^{-\alpha+2}), 
\end{align}
\begin{align} \label{scale}
\mathrm{Var}(\tilde{\mathsf I})&\stackrel{(a)}{=}\black{\int_{\text{Area of } \bar{\mathcal{A}}_{\mathrm{clu}} } ({G^{\mathrm t}_{\mathrm o}})^2 \mathbb{E}[{\mathsf H}^2] r^{-2\alpha} \lambda_{\mathrm S} dr}\notag\\
&\stackrel{(b)}{=}\black{2 \frac{R_{\mathrm {S}}}{R_{\mathrm {E}}}\pi \lambda_{\mathrm S} \int_{R_{\mathrm{clu}}}^{R_{\mathrm{max}}} ({G^{\mathrm t}_{\mathrm o}})^2 \mathbb{E}[{\mathsf H}^2] r^{-2\alpha}r dr} \notag\\
&\hspace{-0 pc}=\frac{2({G^{\mathrm t}_{\mathrm o}})^2}{2\alpha-2}\frac{R_{\mathrm {S}}}{R_{\mathrm {E}}} \pi \lambda_{\mathrm S} \left(1+\frac{1}{m}\right)(R_{\mathrm{clu}}^{-2\alpha+2}-R_{\mathrm{max}}^{-2\alpha+2}),
\end{align}
\black{where both {(a)'s} in \eqref{shape} and \eqref{scale} are derived from calculating expectations of sums of measurable functions over the PPP with an intensity measure characterized by a density~$\lambda_\mathrm S$ within the surface area $\bar{\mathcal{A}}_{\mathrm{clu}}$, and both {(b)'s} follow from the surface area of a spherical dome, defined as $|\mathcal{A}_{r}|=2\pi R_\mathrm  S \left(R_\mathrm  S-R_\mathrm  E -\frac{R_\mathrm  S^2-R_\mathrm  E^2 -r^2}{2 R_\mathrm  E}\right)$, with $\frac{\partial |\mathcal{A}_r|}{\partial r}=2\frac{R_\mathrm  S}{R_\mathrm  E}\pi r$ accordingly.}

By substituting \eqref{shape} and \eqref{scale} into the relations of the Gamma random variable, $k=(\mathbb{E}[\mathsf X])^2 /\text{Var}[\mathsf X]$ and $\theta=~\text{Var}[\mathsf X] / \mathbb{E}[\mathsf X]$, $k_{\tilde{\mathsf I}}$ and $\theta_{\tilde{\mathsf I}}$ are given by \eqref{shape_I} and \eqref{scale_I}, which completes the proof.


\section{Proof of Lemma~\ref{lap_clu}}\label{appendix_lap}
To derive the Laplace transform of the accumulated cluster power $\mathsf D$, we first derive the Laplace transform conditioned on $\mathsf{N}_{\mathrm{S,clu}}=L$, representing the condition that $L$ satellites are in the cluster area $\mathcal{A}_{\mathrm{clu}}$. The conditional Laplace transform of $\mathsf D$ is given by 
\begin{align} \label{cond_lap}
\mathcal{L}_{\{\mathsf D|\mathsf{N}_{\mathrm{S,clu}}=L\}}(s) \notag \\
&\hspace{-5.9 pc}\stackrel{(a)}{=}\left(\mathbb{E}\left[\exp\left({-s G^{\mathrm t}_{\mathrm i}  {\mathsf H}_l||\mathbf{x}_l-\mathbf{u}_1||^{-\alpha}}\right)\right]\right)^L\notag \\ 
&\hspace{-5.9 pc}\stackrel{(b)}{=}\left(\int_{R_{\mathrm{min}}}^{R_{\mathrm{clu}}} \frac{r}{R_{\mathrm {E}} R_\mathrm  S (1-\cos\phi_{\mathrm{clu}})}\mathbb{E}\left[ \exp\left({-s G^{\mathrm t}_{\mathrm i} {\mathsf H}_l r^{-\alpha}}\right)\right] dr\right)^L \notag \\ 
&\hspace{-5.9 pc}\stackrel{(c)}{=}\Bigg(\frac{1}{R_{\mathrm {E}}R_\mathrm  S (1-\cos\phi_{\mathrm{clu}})\alpha} \int_{0}^{\infty}\int_{R_{\mathrm{clu}}^{-\alpha}}^{R_{\mathrm{min}}^{-\alpha}} t ^{-\frac{2}{\alpha}-1} \notag \\
&\hspace{3.5 pc}\cdot e^{-s G^{\mathrm t}_{\mathrm i} ht}\frac{m^m e^{-mh}h^{m-1}}{\Gamma(m)}dtdh\Bigg)^L \notag \\
&\hspace{-5.9 pc}\stackrel{(d)}{=}\Bigg(\frac{1}{R_{\mathrm {E}}R_\mathrm  S (1-\cos\phi_{\mathrm{clu}})\alpha} \int_{R_{\mathrm{clu}}^{-\alpha}}^{R_{\mathrm{min}}^{-\alpha}}t ^{-\frac{2}{\alpha}-1} \int_{0}^{\infty} e^{-v}v^{m-1} \notag \\
&\hspace{-1.5 pc} \cdot (s G^{\mathrm t}_{\mathrm i} t+m)^{-m+1} \frac{m^m}{\Gamma(m)} (s G^{\mathrm t}_{\mathrm i} t+m)^{-1}dvdt \Bigg)^L,
\end{align}
where (a) is obtained using the definition of $\mathsf D$ and the Laplace transform, (b) follows from the conditional PDF $f_{\mathsf R| \mathbf x \in \mathcal{A}_{\mathrm{clu}}}(r)$ given in Lemma \ref{pdf_dis}, (c) results from the change of variable $r^{-\alpha}=t$ and the PDF for $\mathsf{H}_l$ given in \eqref{nak}, and (d) uses the change of variable $v=h(s G^{\mathrm t}_{\mathrm i} t+m)$. The final expression in \eqref{lap_clu_int} is derived by using the definition of the Gamma function $\int_{0}^{\infty} e^{-v}v^{m-1}dv=\Gamma(m)$, which completes the proof.

Additionally, it is noteworthy that the conditional Laplace transform, considered in deriving the coverage probability, remains independent of the number of satellites $L$. Taking the ${L}$th root of \eqref{lap_clu_int}, it becomes evident that the resulting  ${L}$th root of the conditional Laplace transform of $\mathsf D$ is unaffected by the number of satellites $L$.

\section{Proof of Theorem \ref{cov}} \label{appendix_cov}
The Erlang distribution is a special case of the Gamma distribution with a positive integer shape parameter. In the case of the Gamma random variable $\mathsf X$ with the shape parameter $k$, its lower and upper CDF bounds are represented using the Erlang distribution with the shape parameters $\lceil k \rceil$ (lower bound) and $\lfloor k \rfloor$ (upper bound) \cite{Tan:2014},\cite{Kara:2006}. The expression is as follows
\begin{equation} \label{cdf}
	\mathbb {P}[\mathsf{X}\leq x] {{\tilde{k}= \lceil k \rceil\atop\geq}\atop{<\atop \tilde{k}=\lfloor k \rfloor}} 1- e^{-x/\theta}\sum _{n=0}^{\tilde{k}-1}\frac {(x/\theta)^{n}}{n!}.
\end{equation}
In the same context, we transform the approximated Gamma distribution of the accumulated interference power $\tilde{\mathsf I} \sim ~\Gamma(k_{\tilde{\mathsf I}},\theta_{\tilde{\mathsf I}})$ to the Erlang distribution by round-up or -down the shape parameter $k_{\tilde{\mathsf I}}$ to an integer $\tilde{k}$.

To derive the lower and upper bounds of the coverage probability, the coverage probability with respect to $\mathsf I$ is revisited
\begin{equation}\label{cov_I}
P^{\mathrm{cov}}(\gamma; \lambda_{\mathrm S}, R_{\mathrm {S}},\phi_{\mathrm{clu}},\alpha,m) =\mathbb{E}\left[P\left(\mathsf I \leq \frac{\mathsf D}{ \gamma} \right)\right].
\end{equation}
Invoking \eqref{cdf} into \eqref{cov_I}, we first derive the upper bound of the coverage probability with $\tilde{k}=\lfloor k \rfloor$ as
\begingroup
\allowdisplaybreaks
\begin{align} \label{cov_upper}
&\hspace{-0.5 pc} P^{\mathrm{cov,upper}}(\gamma; \lambda_{\mathrm S}, R_{\mathrm {S}},\phi_{\mathrm{clu}},\alpha,m) \notag \\
&\stackrel{(a)}{=} \mathbb{E}\left[1-\sum_{n=0}^{\tilde{k}-1}e^{-\frac{\mathsf D}{\gamma \theta_{\tilde{\mathsf I}}}}\frac {(\frac{\mathsf D}{\gamma \theta_{\tilde{\mathsf I}}})^{n}}{n!}\right] \notag \\
&\stackrel{(b)}{=}1-\sum_{n=0}^{\tilde{k}-1}\frac{(\gamma\theta_{\tilde{\mathsf I}})^{-n}}{n!}(-1)^n \frac{d^n \mathcal{L}_{\{\mathsf D \}}(s)}{ds^n}\Bigg|_{s=\frac{1}{\gamma\theta_{\tilde{\mathsf I}}}},
\end{align}
\endgroup
where (a) comes from the Erlang distribution with integer shape parameter $\tilde{k}$, and (b) follows from the first derivative property of the Laplace transform $\mathbb{E}[t^k e^{-st}]=(-1)^k \frac{d^k\mathcal{L}(s)}{ds^k}$.
Replacing $\lfloor k \rfloor$ with $\lceil k \rceil$, the lower bound of the coverage probability can be derived using the same process as in \eqref{cov_upper}.
\end{appendices}

\bibliographystyle{IEEEtran}
\bibliography{reference}


  \bstctlcite{IEEEexample:BSTcontrol}
   
   \begin{IEEEbiography}[{\includegraphics[width=1in,height=1.25in,clip,keepaspectratio]{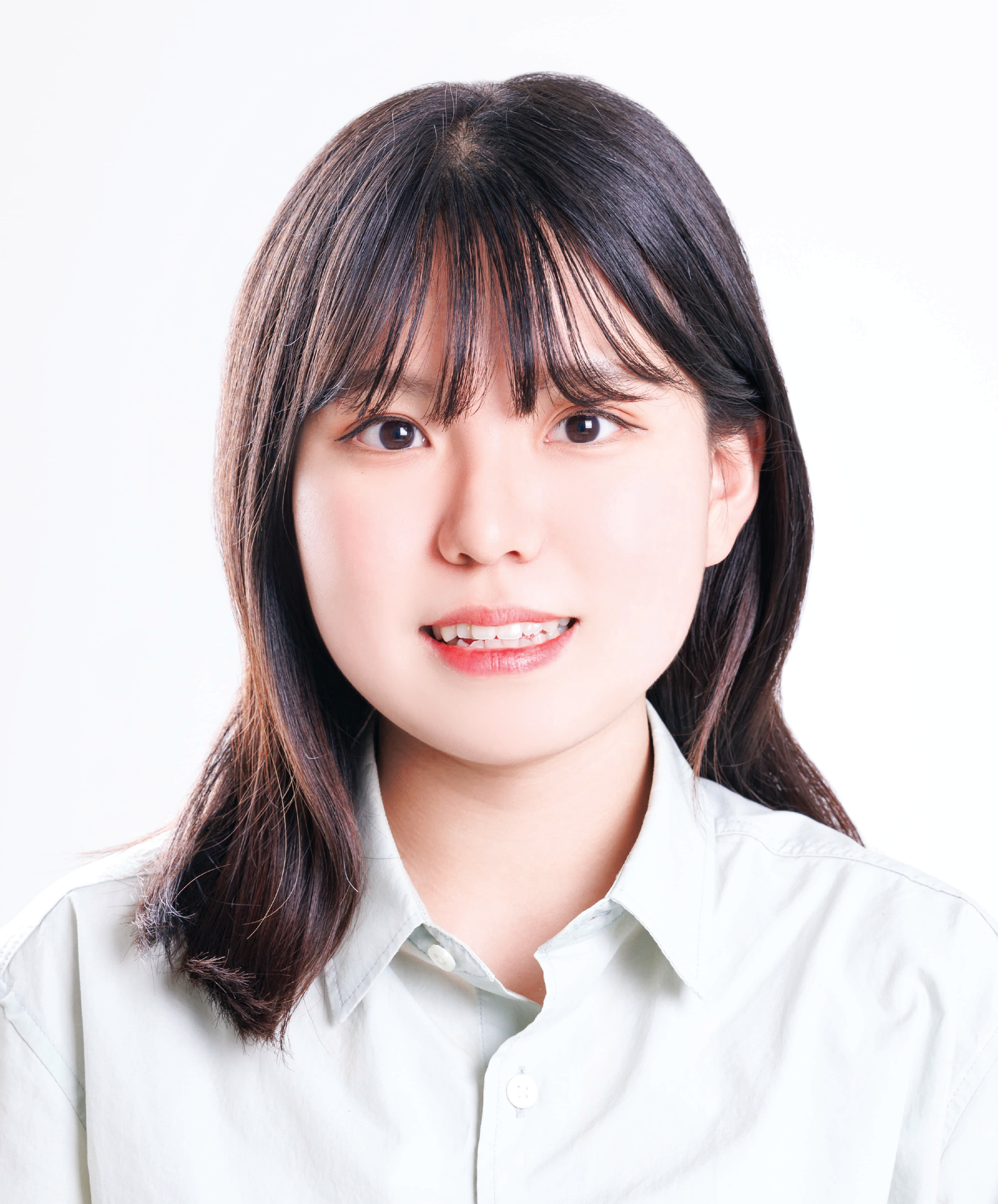}}]{Miyeon Lee}
      (Graduate Student Member, IEEE) received the B.S. (Hons.) degree in Electrical Engineering from Pohang University of Science and Technology (POSTECH) in 2021. She is currently pursuing the integrated M.S. and Ph.D. degree in the School of Electrical Engineering, Korea Advanced Institute of Science and Technology (KAIST). Her research interests include the design and analysis of satellite communication systems and integrated sensing and communication. She was a co-recipient of the 2024 ICTC Best SCSS Paper Award.
   \end{IEEEbiography}
   
    \begin{IEEEbiography}[{\includegraphics[width=1in,height=1.25in,clip,keepaspectratio]{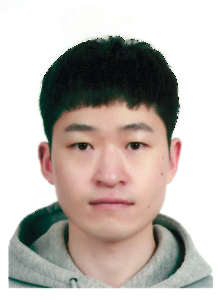}}]{Sucheol Kim}
    (Member, IEEE) received the B.S. and M.S. degrees in electrical engineering from POSTECH in 2017 and 2020, respectively, and the Ph.D. degree in the School of Electrical Engineering from KAIST in 2023. He is working as a Researcher with the Electronics and Telecommunications Research Institute (ETRI). His research interests include system performance analyses, satellite communications, and 3GPP system-level simulations. He was awarded the Outstanding Paper Award of 11th Electronic News ICT Paper Contest in 2019. He was a co-recipient of the 2020 Electronic News ICT Paper Contest Best Paper Award.
   \end{IEEEbiography}

    \begin{IEEEbiography}[{\includegraphics[width=1in,height=1.25in,clip,keepaspectratio]{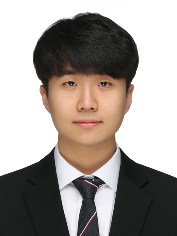}}]{Minje Kim}
      (Graduate Student Member, IEEE) received the B.S. (Hons.) degree in Electrical Engineering from Pohang University of Science and Technology (POSTECH) in 2020. He is currently pursuing the integrated M.S. and Ph.D degree in the School of Electrical Engineering from Korea Advanced Institute of Science and Technology (KAIST). His research interests include the design and analysis of massive MIMO communications and satellite communications. He was a co-recipient of the IITP (Information and Communication Technology Planning and Evaluation Institute) Director’s Award at the ICT Challenge 2021, the Excellence Award at the Electronic Newspaper ICT Paper Contest Grand Exhibition in 2023. He was awarded the KEPCO (Korea Electric Power Corporation) Electrical Engineering Scholarship in 2021.
   \end{IEEEbiography}
   
    \begin{IEEEbiography}[{\includegraphics[width=1in,height=1.25in,clip,keepaspectratio]{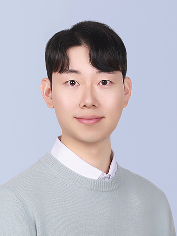}}]{Dong-Hyun Jung}
      (Member, IEEE) received his B.S. degree (with honors) from Pohang University of Science and Technology (POSTECH) in 2015, his M.S. degree from Seoul National University in 2017, and his Ph.D. degree from the Korea Advanced Institute of Science and Technology (KAIST) in 2024. From 2017 to 2025, he was with the Satellite Communications Research Division at the Electronics and Telecommunications Research Institute (ETRI). He also served as an assistant professor at the Department of Information and Communication Engineering, University of Science and Technology (UST), from 2024 to 2025. Since 2025, he has been an assistant professor in the School of Electronic Engineering at Soongsil University. His research interests include non-terrestrial networks, 3GPP system-level simulations, AI-driven communications, and deep space communications.
   \end{IEEEbiography}
   
   \begin{IEEEbiography}[{\includegraphics[width=1in,height=1.25in,clip,keepaspectratio]{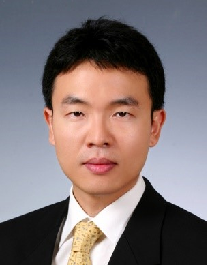}}]{Junil Choi}
      (Senior Member, IEEE) received the B.S. (Hons.) and M.S. degrees in electrical engineering from Seoul National University in 2005 and 2007, respectively, and the Ph.D. degree in electrical and computer engineering from Purdue University in 2015.

He is currently working as a KAIST Endowed Chair Associate Professor with the School of Electrical Engineering, KAIST. From 2007 to 2011, he was a member of technical staff at the Samsung Advanced Institute of Technology (SAIT) and Samsung Electronics Company Ltd., South Korea, where he contributed to advanced codebook and feedback framework designs for the 3GPP LTE/LTE-Advanced and IEEE 802.16m standards. Before joining KAIST, he was a post-doctoral fellow at The University of Texas at Austin from 2015 to 2016 and an assistant professor at POSTECH from 2016 to 2019. His research interests include the design and analysis of massive MIMO, mmWave communications, satellite communications, visible light communications, and communication systems using machine-learning techniques.

Dr. Choi was a co-recipient of the 2022 IEEE Vehicular Technology Society Best Vehicular Electronics Paper Award, the 2021 IEEE Vehicular Technology Society Neal Shepherd Memorial Best Propagation Award, the 2019 IEEE Communications Society Stephen O. Rice Prize, the 2015 IEEE Signal Processing Society Best Paper Award, and the 2013 Global Communications Conference (GLOBECOM) Signal Processing for Communications Symposium Best Paper Award. He was awarded the IEEE ComSoc AP Region Outstanding Young Researcher Award in 2017, the NSF Korea and Elsevier Young Researcher Award in 2018, the KICS Haedong Young Researcher Award in 2019, the IEEE Communications Society Communication Theory Technical Committee Early Achievement Award in 2021, and the 6th Next-Generation Scientist Award the S-OIL Science and Culture Foundation in 2024. He is an IEEE Vehicular Technology Society Distinguished Lecturer, an Area Editor of IEEE Open Journal of the Communications Society and an Associate Editors of IEEE Transactions on Wireless Communications and IEEE Transactions on Communications.

   \end{IEEEbiography}

\end{document}